\numberwithin{equation}{section}
	\theoremstyle{plain}
	\newtheorem{theorem}{Theorem}
	\numberwithin{theorem}{section}
	\newtheorem{lemma}[theorem]{Lemma}       	
	\newtheorem{proposition}[theorem]{Proposition}
	\theoremstyle{definition}
	\newtheorem{remark}[theorem]{Remark}
	\newtheorem{assumption}[theorem]{Assumption}
    \newtheorem{problem}[theorem]{Problem}
\newcommand{\F}{\mathscr{F}}
\newcommand{\mb}{\mathbb{M}}
\newcommand{\vb}{\mathbbm{v}}
\renewcommand{\(}{\left(}
\renewcommand{\)}{\right)}
\renewcommand{\[}{\left[}
\renewcommand{\]}{\right]}
\newcommand\Eb{\mathds{E}}
\newcommand\Fb{\mathds{F}}
\newcommand\Ib{\mathds{1}}
\newcommand\Pb{\mathds{P}}
\newcommand\Rb{\mathds{R}}
\newcommand\Vb{\mathds{V}}
\newcommand\Ac{\mathscr{A}}
\newcommand\Cc{\mathscr{C}}
\newcommand\Nc{\mathscr{N}}
\newcommand\eps{\varepsilon}
\newcommand\Om{\Omega}
\newcommand\sig{\sigma}
\newcommand\lam{\lambda}
\newcommand\Del{\Delta}
\newcommand\kf{{\mathfrak{f}}}
\newcommand\kg{{\mathfrak{g}}}
\newcommand\dd{\mathrm{d}}
\newcommand\ee{\mathrm{e}}
\begin{document}

\title{
A Mathematical Analysis of Technical Analysis}

\author{
Matthew Lorig\thanks{Department of Applied Mathematics, University of Washington, Seattle, WA, USA. Email: mlorig@uw.edu}, \quad 
Zhou Zhou\thanks{School of Mathematics and Statistics, University of Sydney, Sydney, Australia. Email: zhou.zhou@sydney.edu.au}, \quad 
Bin Zou\thanks{Department of Mathematics, University of Connecticut, Storrs, CT, USA. Email: bin.zou@uconn.edu}
}

\date{This version: 
\today
}

\maketitle

\begin{abstract}
In this paper, we investigate trading strategies based on exponential moving averages (ExpMAs) of an underlying risky asset.
We study both logarithmic utility maximization and long-term growth rate maximization problems and find closed-form solutions when the drift of the underlying is modeled by either an Ornstein-Uhlenbeck process or a two-state continuous-time Markov chain.
For the case of an Ornstein-Uhlenbeck drift, we carry out several Monte Carlo experiments in order to investigate how the performance of optimal ExpMA strategies is affected by variations in model parameters and by transaction costs.  
\end{abstract}

\noindent
\textbf{Key words}: Long-term growth; Continuous-time Markov chain; Moving average; Optimal investment; Ornstein-Uhlenbeck process; Partial information; Simulation; Utility maximization;   
\\[0.5em]
\noindent
\textbf{AMS subject classifications}: 91G10, 93E20, 49N30

%
%

\section{Introduction}
\label{sec_introduction}

Technical analysis is a methodology for forecasting the future movements of securities prices by analyzing past market data (most often, but not limited to, prices and trading volumes).  
Within technical analysis, there are many indicators, which purport to provide information about the future direction and volatility of an underlyer (e.g., a stock, currency, interest rate, etc.).  These indicators often have gimmicky names such as, e.g., Smart Money Index, Know Sure Thing Oscillator, Vortex Indicator, Money Flow Index, Bollinger Bands, etc..  From a mathematical standpoint, perhaps the simplest indicator to construct (and one which has an uncharacteristically boring name) is the Moving Average.  As the name suggests, a moving average $Y=(Y_t)_{t \ge 0}$ of a process $X = (X_t)_{t \ge 0}$ is constructed via a convolution of $X$ with a kernel $\rho$.  Specifically,
\begin{align}
Y_t
&=	\int_{0}^t \rho(t-s) \cdot X_s \dd s , &
&\text{where}&
\rho
&\geq 0 , &
&\text{and}&
\int_0^\infty \rho(t) \dd t
&= 1 . 
\end{align}
Common moving averages are the \emph{Simple moving average} (SimMA): $\rho(t) = \Ib_{[0,T]}(t)/T$, where $\Ib$ is an indicator function, and the \emph{Exponential moving average} (ExpMA): $\rho(t) = \lam \ee^{- \lam t}$, where $\lam > 0$.

The main purpose of this paper is to provide a mathematical analysis of trading strategies based on ExpMAs, which are \emph{observable} technical indicators.
To be clear, our aim is \emph{not} to support or disapprove the use of technical analysis in portfolio management. 
We consider utility maximization and long-term growth rate maximization problems for trading strategies based on ExpMAs. Optimal ExpMA strategies are obtained (semi-) explicitly when the drift process of the risky asset is modeled by an Ornstein-Uhlenbeck (OU) process or a continuous-time Markov chain (CTMC), which is often used in a regime switching market. 
In numerical studies, we carry out Monte Carlo simulations to test the performance of optimal ExpMA strategies against a buy-and-hold strategy. 
In addition, we conduct a sensitivity analysis and investigate the impact of transaction costs on optimal ExpMA strategies. 
In general, our simulation results show that optimal ExpMA strategies deliver excellent returns.
However, if an investor's measure of portfolio performance is the Sharpe ratio, or if transaction costs are present, ExpMA-based strategies may not be optimal.
To the best of our knowledge, there is no paper studying optimal ExpMA-based
trading strategies in literature. Our paper will fill this void
and also provide valuable guidance for practitioners who use moving averages to trade securities.


In mathematical finance, optimal investment problems are well studied, c.f., the classical works of \cite{markowitz1952portfolio} and \cite{merton1969lifetime, merton1971optimum}. 
\cite{kim1996dynamic} and \cite{wachter2002portfolio} provide analytical solutions to utility maximization problems (similar to Problem \ref{prob_utility}) when the drift of the stock price is given by an \emph{observable} OU process. 
\cite{bauerle2004portfolio} solve the same problems when the drift of the stock price is modeled by an \emph{observable} CTMC. 
Long-term growth rate maximization problems (similar to Problem \ref{prob_growth}) have been studied by \cite{fleming1995risk} and \cite{fleming1999optimal}. 
Notice that all papers mentioned above assume investors can observe the drift process at all times. 
Our paper is related to utility maximization problems under partial information, in which the drift process is  \emph{unobservable} to investors in the market. 
\cite{lakner1995utility, lakner1998optimal} consider such problems for an \emph{unobservable} OU drift while
\cite{honda2003optimal} and \cite{sass2004optimizing} consider such problems for an \emph{unobservable} CTMC drift. 
The ExpMA trading strategies considered in this paper fall under the category of trend following strategies. 
\cite{dai2010trend, dai2016optimal} consider optimal stopping times problems in an \emph{unobservable} regime switching market (two-state Markov chain drift) for an investor who chooses a sequence of buying and selling times to  maximize the net gain. 
{They show that the optimal trading strategy is trend following, and is superior to a buy-and-hold strategy.}

The standard method of dealing with an unobservable drift is to apply the Wonham filter (see \cite{wonham1964some}) to transform the problem with partial information to the one with full information (via the innovation process), which is used in \cite{lakner1995utility, lakner1998optimal}, \cite{honda2003optimal} \cite{sass2004optimizing}, and \cite{dai2010trend, dai2016optimal}. 
Once full information is gained, one may apply either the martingale method (see \cite{lakner1995utility} and \cite{sass2004optimizing}) or the HJB method (see \cite{honda2003optimal} and \cite{dai2010trend}) to obtain optimal solutions. Using ExpMAs to find optimal trading strategies is a completely different methodology because, although the drift process is assumed to be unobservable, the \emph{observable} ExpMA of the risky asset is used to deduce information about drift and to construct trading strategies.

The rest of this paper proceeds as follows.
In Section \ref{sec_setup}, we introduce a market model in which a risky asset has a stochastic drift.
We also describe the optimal investment problems we wish to consider for ExpMA strategies and present a general solution to one of the problems.
In Section \ref{sec_OU}, we obtain optimal ExpMA strategies in explicit form when the drift is modeled as an OU process.
In Section \ref{sec_MC}, we obtain optimal ExpMA strategies when the drift is modeled as a two-state CTMC. 
Numerical studies are presented in Section \ref{sec:monte.carlo}.
Some concluding remarks are offered in Section \ref{sec_conclusion}.
Technical proofs are given in Appendix \ref{sec_appendix}.

\section{Modeling Framework and General Solutions}
\label{sec_setup}

\subsection{The Model}

We now turn our attention to the mathematical analysis of moving average strategies.
We consider a continuous-time financial market, which consists of one riskless asset and one risky asset.
For simplicity, we assume that the risk-free rate of interest is zero so that the riskless asset has a constant value.
The price process of the risky asset $S = (S_t)_{t \geq 0}$
is given by the following dynamics under some given stochastic basis
$(\Om,\F,\Fb= (\F_t)_{t \geq 0},\Pb)$
\begin{equation}\label{eq:dS}
\dd S_t = \mu_t S_t \dd t + \sigma S_t \dd W_t,\quad t\ge 0,\quad\text{and}\quad S_0>0, 
\end{equation}
where the drift $\mu=(\mu_t)_{t \geq 0}$ is $\Fb$-adapted, 
the volatility $\sigma$ is a positive constant,
 and $W$ is a standard one-dimensional Brownian Motion under $\Pb$ with respect to the filtration $\Fb$.
Throughout this paper, we shall assume that $\mu$ 
conspires so that the solution $S$ of \eqref{eq:dS} exists and is strictly positive for all $t \geq 0$.

In a classical portfolio optimization problem, one seeks to solve
\begin{align}
\sup_{\pi \in \Ac} \; \Eb  \big[ U(\Pi_T^\pi) \big], \label{eq:merton.problem}
\end{align}
where $T > 0$ is the terminal time (or planning time), 
$U$ is some utility function, $\pi=(\pi_t)_{t \geq 0}$ is the investor's strategy with $\pi_t$ denoting the investment proportion in the risky asset at time $t$, $\Ac$ is some set of admissible strategies, and $\Pi^\pi = (\Pi^\pi_t)_{t \in [0,T]}$ is the wealth process associated with strategy $\pi$, with dynamics given by
\begin{equation}
\dd \Pi_t^\pi=\frac{\pi_t \, \Pi_t^\pi}{S_t} \dd S_t,\ \ t\in [0,T],\quad\text{and}\quad\Pi_0> 0. 
\end{equation}
In general, the optimal strategy $\pi^*$ depends on knowing  the drift value $\mu_t$ at all times $t \in [0,T]$. 
For instance, in the classical Merton's problem under logarithmic utility, $\pi^*_t = \frac{\mu_t}{\sigma^2}$ for all $t\in [0,T]$.
However, the instantaneous value $\mu_t$ of the drift is often \emph{unobservable}. One way of dealing with this, is to use filtering to estimate $\mu_t$ and derive the optimal strategy based on one's best estimate of $\mu_t$, denoted by $\widehat{\mu}_t$.
In our studies, we use exponential moving averages to deduce information about the drift.

Let us introduce the $\log$ stock price process $X=(X_t)_{t \geq 0}$, which is defined as $X_t := \ln S_t$.  Using It\^o's Lemma, the dynamics of $X$ are given by
\begin{equation}
\label{eqn_X}
\dd X_t=\big( \mu_t - \tfrac{1}{2} \sig^2 \big) \dd t + \sig \dd W_t ,\quad\text{with}\quad
X_0=\ln S_0 .
\end{equation}
Next, we define $Y=(Y_t)_{t \geq 0}$, the \emph{exponential moving average} (ExpMA) of $X$, by
\begin{equation}
\label{eqn_def_ExpMA}
Y_t:=\int_{0}^t \lam \ee^{-\lam(t-s)} X_s \dd s ,\quad t\in [0,T] ,
\end{equation}
where $\lambda>0$ is a constant.
One can easily check that
\begin{align}
\label{eqn_dY}
\dd Y_t
	&=	\lam (X_t - Y_t) \dd t .
\end{align}
One advantage of ExpMAs over all other MAs is that ExpMAs are Markovian, as seen in the dynamics above.

Note that $Y$ mean-reverts to $X$.  If the drift of $X$ is positive, then, at a given time $t$, we will likely have that $Y_t$ is less than $X_t$.  The larger the drift of $X$ is, the larger
the gap between $X_t$ and $Y_t$ will be.  Thus, the quantity $X_t-Y_t$ can provide information about the drift of $X$.  
Note that
$X_t - Y_t$ is easily \emph{observable}.
This motivates us to consider trading strategies of the form
\begin{equation}\label{eq:form}
\pi_t=f(t, X_t - Y_t),\quad t\in [0,T] , 
\end{equation}
where $f$ is some increasing function of the second argument. 
The positive monotonicity of $f$ implies that the ExpMA strategies considered in this paper fall under the category of trend following trading rules.

It will be useful at this point to define  the difference process $Z = (Z_t)_{t \geq 0}$, which is given by $Z_t := X_t - Y_t$.  One can easily verify that the dynamics of $Z$ are given by
\begin{align}
\dd Z_t
	&=	\lam \Big( \frac{\mu_t - \tfrac{1}{2} \sig^2}{\lam} - Z_t \Big) \dd t + \sig \dd W_t .
\end{align}
Solving the stochastic differential equation (SDE) for $Z$, we obtain
\begin{align}
\label{eqn_Z_sol}
Z_t &= Z_0 + e^{-\lambda t} \left( \int_0^t e^{\lambda s} \left(\mu_s - \frac{1}{2} \sig^2 \right) \dd s + \sig \int_0^t e^{\lambda s} \dd W_s \right).
\end{align}

Note that if $\mu$ is a constant, $Z$ is simply an OU process.  Note further that $Z$ is \emph{not} independent of $X$ as both processes are driven by the same Brownian motion $W$.  Using our definition of $Z$, we can write strategies of the form \eqref{eq:form} as follows
\begin{equation}
\pi_t:=f(t, Z_t),\quad t \in [0,T].
\end{equation}
We will denote by $\Pi^f = (\Pi_t^f)_{t \in [0,T]}$ the wealth processes corresponding to investment strategy $\pi = (\pi_t)_{t \in [0,T]} = (f(t,Z_t))_{t \in [0,T]}$. Note $\Pi_0$ is the same for all strategies $f$.

We are primarily interested in the following portfolio optimization problems for the ExpMA strategies.
\begin{problem}
\label{prob_utility}
Find the optimal strategy $f^* \in \Cc_i$ to the utility optimization problem
\begin{align}
\sup_{f \in \Cc_i} \; \Eb  \left[ \ln \left( \frac{\Pi_T^f}{\Pi_0} \right) \right], \quad i = 1,2 ,
\end{align}
where $T>0$ is the planning time, $\Cc_1$ is the set of \emph{affine} strategies
\begin{align}
\Cc_1
	&:=	 \left\{ f : [0,T] \times \mathbb{R} \to \mathbb{R} \; | \; f(t,z) = a \cdot z + b, \quad \text{where } a, b \in \mathbb{R} \right\}. \label{eqn_def_Cc1}	
\end{align}
and $\Cc_2$ is the set of \emph{square-integrable} strategies
\begin{align}
\Cc_2
	&:=	\left\{f:[0,T]\times\mathbb{R}\mapsto \mathbb{R}\, \bigg| \; \Eb \left[\int_0^Tf^2(t,Z_t)\, \dd t \right]<\infty \right\} .  \label{eqn_def_Cc2}
\end{align}
\end{problem}

\begin{problem}
\label{prob_growth}
Find the optimal strategy $f^* \in \Cc_i$ to maximize the long-term growth rate
\begin{align}
\sup_{f \in \Cc_i} \; \lim_{T \to \infty} \frac{1}{T} \, \Eb  \left[ \ln \left( \frac{\Pi_T^f}{\Pi_0} \right) \right], \quad i = 1,2 .
\end{align}
\end{problem}

\subsection{General Solutions to Problem \ref{prob_utility}}
\label{subsec_general_solutions}

In this section,
we solve Problem \ref{prob_utility} for $\Cc_1$ and $\Cc_2$ Strategies in the most general setting (i.e., no assumption on the dynamics of the drift $\mu$), and present the results in Theorems \ref{theorem_general_utility_C1} and \ref{theorem_general_utility_C2}, respectively.

For any fixed $T>0$, introduce the following notations:
\begin{align}
\label{eqn_ABCD}
A(T) := \int_0^T \Eb [\mu_t Z_t] \dd t, \quad B(T):= \int_0^T \Eb[\mu_t] \dd t, \quad 
C(T):= \int_0^T \Eb[Z_t^2] \dd t, \quad D(T) := \int_0^T \Eb[Z_t] \dd t. \quad
\end{align} 
The theorem below solves Problem \ref{prob_utility} for all affine strategies (i.e., $\Cc_1$ strategies, see definition in \eqref{eqn_def_Cc1}).

\begin{theorem}
	\label{theorem_general_utility_C1}
The optimal ExpMA strategy $f_1^* \in \Cc_1$ to Problem \ref{prob_utility} is 
\begin{align}
\label{eqn_utility_C1_solutions}
f_1^*(t,z) = a_1^* \cdot z + b_1^*,
\end{align}
where $a_1^*$ and $b_1^*$ are given by 
\begin{align}
\label{eqn_utility_C1_a1b1}
\begin{bmatrix}
a_1^*\\
b_1^*
\end{bmatrix}
=   \frac{1}{(C(T)T-D^2(T))\sigma^2}
\begin{bmatrix}
A(T)T-B(T)D(T)\\
B(T)C(T)-A(T)D(T)
\end{bmatrix},
\end{align}
with $A(T)$, $B(T)$, $C(T)$ and $D(T)$ defined in \eqref{eqn_ABCD}.
\end{theorem}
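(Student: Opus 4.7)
The plan is to reduce the $\Cc_1$-optimization to a finite-dimensional concave quadratic program in the two parameters $(a,b)$. First, applying It\^o's Lemma to $\ln \Pi_t^f$ with $\dd S_t / S_t = \mu_t \dd t + \sig \dd W_t$ yields
\begin{align}
\ln \frac{\Pi_T^f}{\Pi_0} = \int_0^T f(t, Z_t) \mu_t \dd t - \frac{\sig^2}{2}\int_0^T f^2(t, Z_t) \dd t + \sig \int_0^T f(t, Z_t) \dd W_t.
\end{align}
For an affine strategy $f(t,z) = az + b$, the stochastic integral is a true martingale provided $\Eb[\int_0^T Z_t^2 \dd t] < \infty$, which is implicit in the finiteness of $C(T)$ in \eqref{eqn_ABCD}. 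Taking expectations, substituting the definitions \eqref{eqn_ABCD}, and expanding $\Eb[\int_0^T (aZ_t + b)^2 \dd t] = a^2 C(T) + 2ab D(T) + b^2 T$, the objective collapses to the quadratic
\begin{align}
J(a,b) := a A(T) + b B(T) - \frac{\sig^2}{2} \bigl( a^2 C(T) + 2ab D(T) + b^2 T \bigr).
\end{align}

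Next, I would impose the first-order conditions $\partial_a J = \partial_b J = 0$, which produce the $2 \times 2$ linear system
\begin{align}
\sig^2 \begin{bmatrix} C(T) & D(T) \\ D(T) & T \end{bmatrix} \begin{bmatrix} a \\ b \end{bmatrix} = \begin{bmatrix} A(T) \\ B(T) \end{bmatrix}.
\end{align}
Inverting the coefficient matrix by Cramer's rule gives precisely the expression \eqref{eqn_utility_C1_a1b1} for $(a_1^*, b_1^*)$. To confirm that this critical point is the global maximizer, I would note that the Hessian of $J$ equals $-\sig^2$ times the coefficient matrix above, whose determinant $C(T) T - D^2(T)$ is strictly positive by the Cauchy--Schwarz inequality applied to $t \mapsto \Eb[Z_t]$ and the constant $1$ in $L^2([0,T])$; the inequality is strict because the Brownian component in \eqref{eqn_Z_sol} excludes the degenerate case in which $Z_t$ is deterministic and constant in $t$ almost everywhere.

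The argument is largely mechanical once the objective has been rewritten as $J(a,b)$, and the main delicate point is the justification that the stochastic integral vanishes in expectation, which hinges on the integrability hypothesis on $Z$ already built into the finiteness of $A, B, C, D$. Everything else reduces to $2 \times 2$ linear algebra and a standard Cauchy--Schwarz check for concavity.
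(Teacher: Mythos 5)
Your proposal is correct and follows essentially the same route as the paper: reduce the objective to the concave quadratic $g(a,b;T)=aA(T)+bB(T)-\tfrac{\sigma^2}{2}\bigl(a^2C(T)+2abD(T)+b^2T\bigr)$, solve the first-order conditions, and use $C(T)T-D^2(T)>0$ (via Cauchy--Schwarz and the non-degeneracy of $Z$) to justify global optimality. You supply more detail than the paper on the vanishing of the stochastic integral and the Hessian check; the only tiny imprecision is that Cauchy--Schwarz should be applied on $[0,T]\times\Omega$ with $dt\otimes\dd\Pb$ (or else combined with Jensen's inequality $(\Eb[Z_t])^2\le\Eb[Z_t^2]$), rather than to $t\mapsto\Eb[Z_t]$ alone.
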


\begin{proof}

By taking expectation of $\ln (\Pi_T^f / \Pi_0)$ and using \eqref{eqn_ABCD}, we obtain
\begin{align}
\Eb \left[\ln \frac{\Pi_T^f}{\Pi_0} \right] 
&= A(T)a+B(T)b-\frac{\sigma^2}{2}\left(C(T)a^2+2D(T)ab+Tb^2\right) := g(a,b;T). \label{eqn_g}
\end{align}

For any fixed $T>0$, it is easy to see that the function $g(\cdot,\cdot;T)$ attains the global maximum 
at $(a_1^*, b_1^*)$, which are given by \eqref{eqn_utility_C1_a1b1}. 
Noticing $(Z_t)_{t\geq 0}$ is not constant, by the Cauchy-Schwarz inequality, we have $C(T)T-D^2(T) > 0$. 
\end{proof}

Next, we consider Problem \ref{prob_utility} for all square-integrable strategies (i.e., $\Cc_2$ strategies, see definition in \eqref{eqn_def_Cc2}) and summarize the results below.

\begin{theorem}
\label{theorem_general_utility_C2}
If the following condition holds,
\begin{align}
\label{eqn_square_integrable_condition}
\int_0^T \left( \dfrac{\Eb [\mu_t | Z_t]}{\sig^2} \right)^2 \dd t < \infty,
\end{align}
then the optimal ExpMA strategy $f_2^* \in \Cc_2$ to Problem \ref{prob_utility} is 
\begin{align}
\label{eqn_utility_C2_solutions}
f_2^*(t, Z_t) = \dfrac{\Eb [\mu_t | Z_t]}{\sig^2}.
\end{align}
\end{theorem}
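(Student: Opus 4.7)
The plan is to reduce Problem \ref{prob_utility} over $\Cc_2$ to a pointwise quadratic optimization by first expanding $\ln (\Pi_T^f/\Pi_0)$ and then projecting $\mu_t$ onto $\sigma(Z_t)$ via the tower property. Applying It\^o's lemma to $\ln \Pi_t^f$ with $\dd \Pi_t^f = f(t,Z_t) \Pi_t^f (\mu_t \dd t + \sig \dd W_t)$ yields
\begin{align}
\ln \frac{\Pi_T^f}{\Pi_0}
	&=	\int_0^T f(t,Z_t)\mu_t \, \dd t - \frac{\sig^2}{2} \int_0^T f^2(t,Z_t) \, \dd t + \sig \int_0^T f(t,Z_t) \, \dd W_t.
\end{align}
For $f \in \Cc_2$, the square-integrability built into \eqref{eqn_def_Cc2} makes the stochastic integral a square-integrable martingale with zero mean; taking expectations and invoking Fubini (plus dominated convergence where needed) gives
\begin{align}
\Eb \left[\ln \frac{\Pi_T^f}{\Pi_0}\right]
	&=	\int_0^T \Eb \big[ f(t,Z_t) \mu_t \big] \, \dd t - \frac{\sig^2}{2} \int_0^T \Eb \big[ f^2(t,Z_t) \big] \, \dd t.
\end{align}

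Since $f(t,Z_t)$ is $\sig(Z_t)$-measurable, the tower property yields $\Eb[f(t,Z_t) \mu_t] = \Eb[f(t,Z_t)\,\Eb[\mu_t \mid Z_t]]$, so the objective becomes
\begin{align}
J(f) := \int_0^T \Eb \left[ f(t,Z_t) \, \Eb[\mu_t \mid Z_t] - \frac{\sig^2}{2} f^2(t,Z_t) \right] \dd t.
\end{align}
The integrand depends on $f$ only through its values $f(t,Z_t)(\om)$ at each $(t,\om)$, so we may maximize pointwise: for each realization of $(t,Z_t)$, the scalar map $x \mapsto x \cdot \Eb[\mu_t \mid Z_t] - \tfrac{\sig^2}{2} x^2$ is a concave quadratic with unique maximizer $x^* = \Eb[\mu_t \mid Z_t]/\sig^2$. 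This gives the candidate
\begin{align}
f_2^*(t,Z_t) = \frac{\Eb[\mu_t \mid Z_t]}{\sig^2}.
\end{align}

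It remains to verify admissibility, namely that $f_2^* \in \Cc_2$; this is precisely where hypothesis \eqref{eqn_square_integrable_condition} (interpreted after taking expectation, or equivalently a.s.\ plus a Fubini/Tonelli argument) enters, ensuring $\Eb[\int_0^T (f_2^*(t,Z_t))^2 \dd t] < \infty$. I expect the main technical point to be the justification that the Brownian stochastic integral has zero expectation for every $f \in \Cc_2$ (so that no competitor $f$ can ``beat'' the pointwise optimizer by exploiting martingale losses); this is immediate from the It\^o isometry under \eqref{eqn_def_Cc2}. Optimality of $f_2^*$ then follows from the pointwise inequality $x\,\Eb[\mu_t\mid Z_t] - \tfrac{\sig^2}{2}x^2 \leq \tfrac{1}{2\sig^2}(\Eb[\mu_t \mid Z_t])^2$ for every $x \in \Rb$, integrated in $(t,\om)$.
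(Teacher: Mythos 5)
Your proposal is correct and follows essentially the same route as the paper's proof: expand $\ln(\Pi_T^f/\Pi_0)$ via It\^o, kill the stochastic integral in expectation using the square-integrability in $\Cc_2$, apply the tower property to replace $\mu_t$ by $\Eb[\mu_t\mid Z_t]$, and maximize the resulting concave quadratic pointwise. The paper compresses all of this into two displayed lines and the word ``obvious,'' so your version is simply a more complete write-up of the same argument (including the admissibility check that the paper leaves implicit).
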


\begin{proof}
Given $f \in \Cc_2$, from the SDE of $\Pi^f$, we obtain
\begin{align}
\Eb \left[\ln \frac{\Pi_T^f}{\Pi_0}\right]&=\Eb \left[\int_0^T \left( \mu_t \cdot f(t,Z_t) - \frac{1}{2}\sigma^2 \cdot f^2(t,Z_t) \right) \dd t\right]\\
&=\int_0^T \Eb\left[\left(\Eb\left[\mu_t|Z_t\right] \cdot f(t,Z_t) - \frac{1}{2}\sigma^2 \cdot f^2(t,Z_t) \right)\right] \dd t.
\end{align}
The desired result is then obvious.
\end{proof}

\begin{remark}
The general solution to Problem \ref{prob_growth} is not available for either $\Cc_1$ or $\Cc_2$ strategies. 
In the case of an OU-type drift, the optimal strategy to Problem \ref{prob_growth} is the same for both $\Cc_1$ and $\Cc_2$ strategies, see Theorem \ref{thm_OU_growth}. 
In comparison, when the drift is modeled by a two-state Markov chain, the optimal strategy to Problem \ref{prob_growth} is dramatically different for $\Cc_1$ and $\Cc_2$ strategies, see Theorems \ref{thm_MC_growth_C1} and \ref{thm_MC_growth_C2}. 
\end{remark}

\section{Analysis for the Case of an OU-Type Drift}
\label{sec_OU}

In this section, we study Problems \ref{prob_utility} and \ref{prob_growth} when the drift $\mu$ is given by an OU process.
The main results of this section are Theorems \ref{thm_OU_utility_linear}, \ref{thm_OU_utility_general} and \ref{thm_OU_growth}, where we present the solutions to Problems \ref{prob_utility} and \ref{prob_growth}. 
We make the following two assumptions for the analysis in this section.

\begin{assumption}
	\label{assumption_OU}
The drift $\mu$ follows an OU process,
\begin{align}\label{eqn_OU_drift}
\dd \mu_t
		&=  \kappa \left(\bar{\mu} - \mu_t  \right) \dd t + \delta \dd \bar{W}_t, 
&& t \in [0,T],
\end{align}
where $\kappa$ and $\delta$ are positive constants, $\bar{\mu}$ is the mean-reversion parameter, 
and $\bar{W}$ is a standard Brownian motion, independent of $W$.
We assume
$\kappa \neq \lambda$, where $\lambda$ is the exponential moving average constant, see \eqref{eqn_dY}.

\end{assumption}

\begin{assumption}
\label{assumption_normal_mu0}
$\mu_0$ is normally distributed with mean $m_1(0)$ and variance $v_1(0)$, $\mu_0 \sim \Nc(m_1(0),v_1(0))$, and is independent of $(W_t)_{t \ge 0}$ and $(\bar{W}_t)_{t \ge 0}$.\footnote{$\Nc(m,v)$ denotes a normal distribution with mean $m$ and variance $v$.}
\end{assumption}

\subsection{Utility Maximization for $\Cc_1$ and $\Cc_2$ Strategies}
\label{subsec_OU_utility}

In this section we solve Problem \ref{prob_utility} for strategies $f \in \Cc_1$ and $f \in \Cc_2$ when the dynamics of $\mu$ are given by \eqref{eqn_OU_drift}.
The general solutions to such a problem are obtained previously in Theorem \ref{theorem_general_utility_C1} for $\Cc_1$ strategies and in Theorem \ref{theorem_general_utility_C2} for $\Cc_2$ strategies, respectively.
Here when we make assumptions on the dynamics of $\mu$ and the distribution of $\mu_0$, we can further reduce the general results into fully explicit forms, see  Theorem \ref{thm_OU_utility_linear} for $\Cc_1$ strategies and Theorem \ref{thm_OU_utility_general} for $\Cc_2$ strategies.

 \begin{theorem}
	\label{thm_OU_utility_linear}
	Let Assumptions \ref{assumption_OU} and \ref{assumption_normal_mu0} hold, then 
	the optimal ExpMA strategy $\mathsf{f}_1^{*} \in \Cc_1$ to Problem \ref{prob_utility} is 
	given by $\mathsf{f}_1^{*}(t,z) = a_1^* \cdot z + b_1^*$, where $a_1^*$ and $b_1^*$ are given by  \eqref{eqn_utility_C1_a1b1} in Theorem \ref{theorem_general_utility_C1}. Furthermore,  
	$A(T)$, $B(T)$, $C(T)$, and $D(T)$, defined by \eqref{eqn_ABCD}, are computed explicitly by \eqref{eqn_OU_A}-\eqref{eqn_OU_D}.
\end{theorem}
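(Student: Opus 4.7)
The theorem follows from Theorem \ref{theorem_general_utility_C1} once the four integrals $A(T)$, $B(T)$, $C(T)$, $D(T)$ defined in \eqref{eqn_ABCD} are evaluated in closed form under the OU dynamics \eqref{eqn_OU_drift}. The plan is therefore a moment computation: derive the joint first and second moments of $(\mu_t, Z_t)$ explicitly, then integrate them over $[0,T]$ and substitute into the formula \eqref{eqn_utility_C1_a1b1} already supplied by Theorem \ref{theorem_general_utility_C1}.

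Under Assumptions \ref{assumption_OU} and \ref{assumption_normal_mu0}, $\mu$ is Gaussian at each time $t$, and I read off $m_1(t) := \Eb[\mu_t] = \bar\mu + (m_1(0) - \bar\mu)\ee^{-\kappa t}$, while the variance $v_1(t)$ solves $\dot v_1(t) = -2\kappa v_1(t) + \delta^2$ with explicit solution; hence $\Eb[\mu_t^2] = v_1(t) + m_1(t)^2$ is known. Next, from $\dd Z_t = (\mu_t - \tfrac{1}{2}\sigma^2 - \lambda Z_t)\,\dd t + \sigma\,\dd W_t$ I take expectations to obtain the linear ODE $\tfrac{\dd}{\dd t}\Eb[Z_t] = m_1(t) - \tfrac{1}{2}\sigma^2 - \lambda\,\Eb[Z_t]$; its closed-form solution uses $\kappa \neq \lambda$ from Assumption \ref{assumption_OU} to prevent a resonance in the mode $\ee^{-\kappa t}$.

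Because $W$ and $\bar W$ are independent, $[\mu,Z] \equiv 0$, and Itô's product rule yields
\begin{equation*}
\tfrac{\dd}{\dd t}\Eb[\mu_t Z_t] = \Eb[\mu_t^2] - \tfrac{1}{2}\sigma^2 m_1(t) + \kappa\bar\mu\,\Eb[Z_t] - (\kappa+\lambda)\Eb[\mu_t Z_t],
\end{equation*}
a first-order linear ODE whose forcing is now entirely in hand. Applying Itô to $Z_t^2$ similarly produces $\tfrac{\dd}{\dd t}\Eb[Z_t^2] = 2\Eb[\mu_t Z_t] - \sigma^2 \Eb[Z_t] - 2\lambda \Eb[Z_t^2] + \sigma^2$, which I solve last because its forcing involves $\Eb[Z_t]$ and $\Eb[\mu_t Z_t]$. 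Integrating $m_1(t)$, $\Eb[Z_t]$, $\Eb[\mu_t Z_t]$, $\Eb[Z_t^2]$ over $[0,T]$ then yields $B(T)$, $D(T)$, $A(T)$, $C(T)$ in the closed forms \eqref{eqn_OU_A}--\eqref{eqn_OU_D}, and substituting into \eqref{eqn_utility_C1_a1b1} delivers $a_1^*$ and $b_1^*$.

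The main obstacle is purely bookkeeping. Each successive ODE introduces exponential modes with rates drawn from $\{\kappa,\lambda,2\kappa,2\lambda,\kappa+\lambda\}$, so the final expressions for the four integrals are lengthy and error-prone to assemble. Assumption \ref{assumption_OU}'s hypothesis $\kappa \neq \lambda$ is not cosmetic: it is exactly what keeps denominators such as $\kappa - \lambda$ (which arise from integrating forcings of rate $\kappa$ against the integrating factor $\ee^{\lambda t}$) nonzero throughout, so that every step of the sequential ODE solve remains well-defined and explicit.
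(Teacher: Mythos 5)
Your proposal is correct, and it reaches the same closed forms by a genuinely different route from the paper. The paper's Appendix proof solves the SDEs pathwise: it writes $\mu_t$ and $Z_t$ explicitly in terms of $\mu_0$ and the Wiener integrals $N_1(t)=\int_0^t e^{\kappa s}\dd\bar W_s$, $N_2(t)=\int_0^t e^{\lambda s}\dd W_s$, $N_3(t)=\int_0^t e^{\lambda s}\dd\bar W_s$ (using integration by parts to eliminate $\int_0^t\mu_s e^{\lambda s}\dd s$ in favor of $\mu_t$, $\mu_0$ and $N_3$), then computes all pairwise covariances of these Gaussian building blocks and assembles $m_1,v_1,m_2,v_2,m_3$ before integrating. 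You instead derive a triangular system of first-order linear ODEs for $\Eb[\mu_t]$, $\Eb[Z_t]$, $\Eb[\mu_t Z_t]$, $\Eb[Z_t^2]$ via It\^o's product rule and the vanishing quadratic covariation $[\mu,Z]\equiv 0$; I checked your drift identities for $\mu_t Z_t$ and $Z_t^2$ and they are right, and the sequential solve produces exactly the exponential modes $\{\kappa,\lambda,2\kappa,2\lambda,\kappa+\lambda\}$ appearing in \eqref{eqn_OU_A}--\eqref{eqn_OU_D}. Your diagnosis of the role of $\kappa\neq\lambda$ matches the paper's (it enters there through the denominator $\kappa-\lambda$ in the integration-by-parts step). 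The ODE route is arguably less error-prone for the cross-moments and avoids manipulating the joint Gaussian structure explicitly; the paper's route yields the full joint law of $(\mu_t,Z_t)$ as a by-product, which it reuses informally later (e.g., the correlation coefficient in the value-function comparison), though everything needed there is also recoverable from your moments. Two small points you should make explicit in a write-up: the initial conditions for your ODEs (the paper implicitly normalizes $Z_0=0$), and the justification that the stochastic-integral terms are true martingales so their expectations vanish (immediate here since all processes involved are Gaussian with locally bounded second moments).
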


\begin{proof}  
	The detailed computations of  $A(T)$, $B(T)$, $C(T)$, and $D(T)$ are given in Appendix \ref{sec:proof-1}.
\end{proof}

In the above theorem, we consider Problem \ref{prob_utility} for the class of affine functionals.
Next, we extend the analysis to a larger class (square-integrable functionals), and present the results in Theorem \ref{thm_OU_utility_general}. The following notations are needed. 
\begin{align}
m_1(t) := \Eb[\mu_t], && v_1(t) := \Vb[\mu_t], && m_2(t):= \Eb[Z_t], && v_2(t) := \Eb[Z_t], 
&& m_3(t) := \Eb[\mu_t Z_t].
\end{align}
Explicit expressions for the above quantities are given respectively in equations \eqref{eqn_drift_mean},  \eqref{eqn_drift_variance}, \eqref{eqn_OU_m2}, \eqref{eqn_OU_v2}, and \eqref{eqn_OU_m3} in  Appendix \ref{sec:proof-1}.

\begin{theorem}
\label{thm_OU_utility_general}
Let Assumptions \ref{assumption_OU} and \ref{assumption_normal_mu0} hold, then 
the optimal ExpMA strategy $\mathsf{f}_2^* \in \Cc_2$ to Problem \ref{prob_utility} is
\begin{align}
\label{eqn_OU_utility_general_optimal}
\mathsf{f}_2^*(t,z)=a_2^*(t) \cdot z+ b_2^*(t),
\end{align}
where
\begin{align}\label{eqn_opt_a2b2}
  a_2^*(t)=\frac{m_3(t)-m_1(t)m_2(t)}{v_2(t)\sigma^2}\quad \text{  and  } \quad 
  b_2^*(t)= \frac{m_1(t)}{\sigma^2}-\frac{m_2(t)\left(m_3(t)-m_1(t)m_2(t)\right)}{v_2(t)\sigma^2}.
\end{align}
\end{theorem}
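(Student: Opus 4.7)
The plan is to invoke Theorem \ref{theorem_general_utility_C2}, which already identifies the optimal $\Cc_2$ strategy as $\Eb[\mu_t\mid Z_t]/\sig^2$, and then exploit the fact that under Assumptions \ref{assumption_OU} and \ref{assumption_normal_mu0} the pair $(\mu_t, Z_t)$ is jointly Gaussian, so that the conditional expectation reduces to an affine function of $Z_t$ with coefficients determined by the first two moments.

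First I would verify joint Gaussianity of $(\mu_t, Z_t)$. Solving \eqref{eqn_OU_drift} gives
\begin{align}
\mu_t = e^{-\kappa t}\mu_0 + \bar\mu(1-e^{-\kappa t}) + \delta \int_0^t e^{-\kappa(t-s)}\dd\bar W_s,
\end{align}
and \eqref{eqn_Z_sol} expresses $Z_t$ as an affine functional of $\mu_0$, the Brownian motions $W$ and $\bar W$, and the deterministic initial value $Z_0$. Because $\mu_0 \sim \Nc(m_1(0), v_1(0))$ is independent of $(W,\bar W)$, the vector $(\mu_t, Z_t)$ is a linear transformation of a Gaussian input and is therefore jointly normal.

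Next I would apply the standard Gaussian conditioning formula
\begin{align}
\Eb[\mu_t\mid Z_t] = m_1(t) + \frac{\text{Cov}(\mu_t,Z_t)}{\Vb[Z_t]}\bigl(Z_t - m_2(t)\bigr),
\end{align}
and then rewrite the covariance as $\text{Cov}(\mu_t,Z_t) = m_3(t) - m_1(t) m_2(t)$ using the definitions of $m_1, m_2, m_3, v_2$ given just before the theorem. Dividing by $\sig^2$ as prescribed by Theorem \ref{theorem_general_utility_C2} yields exactly $\mathsf{f}_2^*(t,z) = a_2^*(t)\, z + b_2^*(t)$ with the coefficients \eqref{eqn_opt_a2b2}, after collecting the $Z_t$ and constant terms. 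Before concluding, I would check the admissibility condition \eqref{eqn_square_integrable_condition}: since $\Eb[\mu_t\mid Z_t]$ is affine in the Gaussian random variable $Z_t$, its square has finite expectation for each $t$, and because $m_1,m_2,m_3,v_2$ are continuous (indeed explicit elementary) functions of $t \in [0,T]$ bounded away from degeneracy (as $v_2(t)>0$ for $t>0$ whenever $Z$ is nontrivial), the time integral in \eqref{eqn_square_integrable_condition} is finite.

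The main obstacle I anticipate is purely bookkeeping: establishing joint Gaussianity rigorously (rather than hand-waving from the SDEs) and, in particular, ensuring $v_2(t) > 0$ so that the conditioning formula is well defined on $(0,T]$. Once joint normality is in hand the rest is algebraic manipulation; the nontrivial explicit expressions for $m_1, v_1, m_2, v_2, m_3$ are deferred to the appendix and are not needed for the structural argument above.
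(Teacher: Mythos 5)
Your proposal is correct and follows the route the paper intends: the paper omits the proof, noting only that it parallels Theorem \ref{theorem_general_utility_C2}, and your argument---invoke that general result to get $\Eb[\mu_t\mid Z_t]/\sig^2$, establish joint Gaussianity of $(\mu_t,Z_t)$ from the linear SDE representations, and apply the Gaussian conditioning formula to obtain the affine coefficients \eqref{eqn_opt_a2b2}---is exactly the missing content. Your additional checks (admissibility via \eqref{eqn_square_integrable_condition} and nondegeneracy $v_2(t)>0$ for $t>0$) are appropriate and go slightly beyond what the paper records.
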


\begin{proof}
The proof is similar to that of Theorem \ref{theorem_general_utility_C2} and hence is omitted.
\end{proof}

\begin{remark}
In both Theorems \ref{thm_OU_utility_linear} and \ref{thm_OU_utility_general}, the optimal ExpMA strategy is obtained in closed-form, i.e., once the model parameters in \eqref{eq:dS} and \eqref{eqn_OU_drift} are estimated or given, we are able to compute $(a_1^*, b_1^*)$ using  \eqref{eqn_utility_C1_a1b1} and $(a_1^*(t), b_1^*(t))$ using \eqref{eqn_opt_a2b2}, respectively. 
Theorem \ref{thm_OU_utility_general} shows that under a more general class $\Cc_2$, the optimal ExpMA strategy $\mathsf{f}_2^*$ is still in affine form.
Such a strong result cannot be deduced from the general solution in Theorem \ref{theorem_general_utility_C2}.
\end{remark}

\begin{remark}
If the drift $\mu$ in \eqref{eqn_OU_drift} is fully observable, \cite{kim1996dynamic} provide analytical solutions to Problem \ref{prob_utility}. 
Specifically, the value function in their studies is
\begin{align}\label{eqn_def_optimal_value_obs}
\bar V(T)  := \sup_{\pi \in \bar \Ac} \Eb \[\ln\frac{\Pi_T^\pi}{\Pi_0}\],
\end{align}
where 
\begin{align} 
\label{eqn_def_Ac_bar}
\bar \Ac := \left\{ \pi \text{ is } \Fb\text{-adapted}  \, \Bigg| \, \Eb \left[\int_0^T \pi_t^2 \dd t \right] < \infty \right\}.
\end{align}
They obtain the optimal investment strategy as $\bar{\pi}_t^* = \mu_t / \sigma^2$ for all $t \in [0,T]$. 

If the drift $\mu$ in \eqref{eqn_OU_drift} is unobservable, \cite{lakner1995utility, lakner1998optimal} considers the problem 
\begin{equation}\label{e112}
\check{V}(T) :=\sup_{\pi \in \Ac^S} \; \Eb  \big[ U(\Pi_T^\pi) \big],
\end{equation}
where 
\begin{equation}\label{e111}
\Ac^S:=\left\{\pi \text{ is } \Fb^S\text{-adapted}\ \bigg|\  \Eb \left[\int_0^T \pi_t^2 \, \dd t \right]<\infty\right\}.
\end{equation} 
Given $U(x)=\ln(x)$, the optimal investment strategy is obtained by 
\begin{align}
\label{eqn_partial_OU_strategy}
\check{\pi}^*_t= \dfrac{\Eb \left[ \mu_t|\mathcal{F}_t^S \right] }{\sigma^2}.
\end{align}

Notice that $\bar \Ac$ in Problem \eqref{eqn_def_optimal_value_obs} is not the same as $\Ac^S$  in Problem \eqref{e112}, where $\pi$ is $\Fb^S$-adapted. 
As seen above, when the drift is unobservable, the best strategy, among all that are adapted to the filtration generated by the price process, is to use the true filter $\Eb[\mu_t|\mathcal{F}_t^S]$ to replace the drift $\mu_t$ at all times.
\end{remark}

In the remaining of this subsection, we compare the value functions to the logarithmic utility maximization under $\Cc_1$, $\Cc_2$, $\bar \Ac$- and $\Ac^S$-adapted strategies.
To this purpose, denote
\begin{align}
 V_1^*(T) : = \sup_{f \in \Cc_1} V^f(T) := \sup_{f \in \Cc_1} \Eb \left[ \ln \left( \frac{\Pi^f_T}{\Pi_0} \right) \right] , \qquad
  V_2^*(T) : = \sup_{f \in \Cc_2} V^f(T) := \sup_{f \in \Cc_2} \Eb \left[ \ln \left( \frac{\Pi^f_T}{\Pi_0} \right) \right].
\end{align}

Since $\Cc_1 \subset \Cc_2$, $V_1^*(T) \le V_2^*(T)$ for all $T >0$.
Furthermore, $\mathsf{f}_2^* \notin \Cc_1$ but $\mathsf{f}_1^* \in \Cc_2$, we claim $  V_1^*(T) < V_2^*(T)$ for all $T >0$. 
In Section \ref{subsec_general_solutions}, we have computed $V^f(T)$ as $g(a,b;T)$ when $f \in \Cc_1$, see \eqref{eqn_g}. 
In consequence, $V_1^*(T)$ defined above is equal to $g(a_1^*, b_1^*;T)$, where $a_1^*$ and $b_1^*$ are given by \eqref{eqn_utility_C1_a1b1}. 
The proposition below presents the comparison results among $\bar{V}(T)$,  $\check{V}(T)$ and $V_2^*(T)$.

\begin{proposition}
Let Assumptions \ref{assumption_OU} and \ref{assumption_normal_mu0} hold, we have 
\begin{align}
\bar{V}(T) > V_2^*(T) > \check{V}(T) \qquad \text{for all } T>0.
\end{align}
\end{proposition}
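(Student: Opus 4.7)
My plan is to express all three value functions in a unified second-moment form, reducing both inequalities to quantitative comparisons. Repeating the It\^o argument from the proof of Theorem \ref{theorem_general_utility_C2} for each admissible class shows that in every case the optimizer takes the form $\pi^* = \eta/\sigma^2$ for a specific conditional expectation $\eta$ of $\mu$, with value $\frac{1}{2\sigma^2}\int_0^T \Eb[\eta_t^2]\,\dd t$. Thus, via \cite{kim1996dynamic}, $\bar V(T) = \frac{1}{2\sigma^2}\int_0^T \Eb[\mu_t^2]\,\dd t$; via \cite{lakner1995utility}, $\check V(T) = \frac{1}{2\sigma^2}\int_0^T \Eb\bigl[(\Eb[\mu_t|\Fb_t^S])^2\bigr]\,\dd t$; and from Theorem \ref{theorem_general_utility_C2}, $V_2^*(T) = \frac{1}{2\sigma^2}\int_0^T \Eb\bigl[(\Eb[\mu_t|Z_t])^2\bigr]\,\dd t$.

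For the first inequality $\bar V(T) > V_2^*(T)$, I would apply the conditional-variance identity $\Eb[\mu_t^2] - \Eb[(\Eb[\mu_t|Z_t])^2] = \Eb[\Vb[\mu_t|Z_t]]$. Under Assumptions \ref{assumption_OU} and \ref{assumption_normal_mu0}, $(\mu_t, Z_t)$ is jointly Gaussian with $\Vb[\mu_t|Z_t] = v_1(t) - (m_3(t)-m_1(t)m_2(t))^2 / v_2(t)$, which is strictly positive for every $t > 0$ because $\mu$ carries $\bar W$-noise independent of the $W$-driven statistic $Z$. Integrating gives the claimed strict inequality.

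For the second inequality $V_2^*(T) > \check V(T)$, I would compute both conditional expectations explicitly within the OU/Gaussian framework and compare their integrated second moments. The Lakner filter $\hat\mu_t := \Eb[\mu_t|\Fb_t^S]$ satisfies the Kalman-Bucy SDE $\dd\hat\mu_t = \kappa(\bar{\mu} - \hat\mu_t)\,\dd t + \sigma^{-1}\gam(t)\,\dd I_t$, with $\gam(t)$ the solution of a scalar Riccati ODE (initial value $v_1(0)$) and $I$ the innovation process, yielding a closed form for $\Eb[\hat\mu_t^2]$. In parallel, the Gaussian regression gives $\Eb[\mu_t|Z_t] = m_1(t) + \frac{m_3(t)-m_1(t)m_2(t)}{v_2(t)}\bigl(Z_t - m_2(t)\bigr)$, with second moment $m_1(t)^2 + (m_3(t)-m_1(t)m_2(t))^2/v_2(t)$. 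Substituting the Appendix \ref{sec:proof-1} expressions for the moments and integrating reduces the inequality to an explicit algebraic/ODE comparison.

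The main obstacle, and the substantive part of the argument, is the second inequality. A naive termwise bound via the tower property yields $\Eb[(\Eb[\mu_t|Z_t])^2] \le \Eb[\hat\mu_t^2]$ pointwise because $\sigma(Z_t) \subset \Fb_t^S$, so no pointwise-in-$t$ comparison can deliver the stated direction; the proof must proceed through the explicit integrated formulas above. I expect the key step to be identifying the specific cancellation, between the Riccati solution $\gam(t)$ governing the Lakner filter and the ExpMA-induced joint covariance structure of $(\mu,Z)$, that reverses the naive pointwise bound once integrated over $[0,T]$.
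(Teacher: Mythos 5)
Your treatment of $\bar V(T) > V_2^*(T)$ is correct and is essentially the paper's argument: both reduce the difference to the integrated conditional variance of $\mu_t$ given $Z_t$ (the paper writes this as $(1-\mathrm{corr}^2(Z_t,\mu_t))\,v_1(t)>0$ in \eqref{eqn_value_function_f2} versus \eqref{eqn_V_bar}), strictly positive because $\mu$ carries $\bar W$-noise independent of $W$.

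The gap is in the second inequality, and you have in fact put your finger on it yourself without drawing the consequence. You correctly observe that $\sigma(Z_t)\subset\mathcal{F}_t^S$ forces the pointwise bound $\Eb[(\Eb[\mu_t\mid Z_t])^2]\le\Eb[(\Eb[\mu_t\mid\mathcal{F}_t^S])^2]$, and you then hope that some cancellation between the Riccati solution $\gam(t)$ and the ExpMA covariance structure ``reverses the naive pointwise bound once integrated.'' No such reversal is possible: integration over $[0,T]$ preserves the direction of a pointwise inequality, so your representation $\check V(T)=\frac{1}{2\sigma^2}\int_0^T\Eb[(\Eb[\mu_t\mid\mathcal{F}_t^S])^2]\,\dd t$ with the genuine Kalman--Bucy filter delivers $V_2^*(T)\le\check V(T)$, the opposite of the claim, and the plan cannot be completed as written. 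The paper's proof does not go through the Riccati equation at all: invoking the independence of $W$, $\bar W$ and $\mu_0$, it evaluates $\Eb[\mu_t\mid\mathcal{F}_t^S]=\Eb[\mu_t]=m_1(t)$ as a deterministic function, so that $\check V(T)=\frac{1}{2\sigma^2}\int_0^T m_1^2(t)\,\dd t$ as in \eqref{eqn_V_check}, and the inequality $V_2^*(T)>\check V(T)$ then drops out immediately from the extra nonnegative term $\mathrm{corr}^2(Z_t,\mu_t)\,v_1(t)$ in \eqref{eqn_value_function_f2}. To reproduce the stated proposition you must adopt that evaluation of the filter rather than the Kalman--Bucy filter you set up --- noting that it sits in tension with your (correct) observation that $Z_t$ is $\mathcal{F}_t^S$-measurable and is correlated with $\mu_t$.
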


\begin{proof}
	
	By plugging the optimal ExpMA strategy $\mathsf{f}_2^*$, given by \eqref{eqn_OU_utility_general_optimal}, into the above expression for $f$, we obtain
	\begin{align}
	V_2^*(T)&=\int_0^T \Eb \left[\left( \mu_t \cdot \mathsf{f}_2^*(t,Z_t) - \frac{1}{2}\sigma^2 \cdot (\mathsf{f}_2^*(t,Z_t))^2 \right) \right] \dd t\\
	&=\int_0^T \left[ \frac{(m_3(t)-m_1(t)m_2(t))^2}{2v_2(t)\sigma^2}+\frac{m_1^2(t)}{2\sigma^2} \right]\, \dd t \\
	&=\frac{1}{2\sigma^2}\int_0^T \left[\text{corr}^2(Z_t,\mu_t)\cdot  v_1(t)+m_1^2(t) \right]\, \dd t, \label{eqn_value_function_f2}
	\end{align}
	where $\text{corr}(Z_t, \mu_t)$ is the correlation coefficient between $Z_t$ and $\mu_t$.
	
	Assuming $\mu$ is
	observable, the optimal strategy $\bar{\pi}^*$ to Problem \eqref{eqn_def_optimal_value_obs} is
	$\bar{\pi}^*_t = \mu_t/\sigma^2$,
	and then
	\begin{align}
	\label{eqn_V_bar}
	\bar V(T)=\Eb \left[\int_0^T \left( \mu_t \bar{\pi}^*_t - \frac{1}{2}\sigma^2 (\bar{\pi}^*_t)^2 \right)\dd t\right]=\frac{1}{2\sigma^2}\int_0^T\[v_1(t)+m_1^2(t)\]  \dd t.
	\end{align}

	Recall the results above, the value function $\check{V}(T)$ is achieved when $\pi^*(t) = \Eb[\mu_t | \mathcal{F}^S_t]/\sig^2$. Since $W$, $\bar{W}$, and $\mu_0$ are independent due to Assumptions \ref{assumption_OU} and \ref{assumption_normal_mu0}, we compute
	\begin{align}
\Eb[\mu_t | \mathcal{F}^S_t]= \Eb[\mu_t] = \bar{\mu} + (m_1(0) - \bar{\mu}) e^{- \kappa t},
	\end{align}  
	where we have used the dynamics of $\mu$ in \eqref{eqn_OU_drift} 
	to derive the last equality. 
	Using this result, we are able to obtain $\check{V}(T)$ as
	\begin{align}
	\check{V}(T) 
	& = \frac{1}{2 \sig^2} \int_0^T m_1^2(t) \dd t . \label{eqn_V_check}
	\end{align}
	
Since $0 < \text{corr}(Z_t, \mu_t) < 1$ and $v_1(t) = \Vb[\mu_t] >0$ for all $0 \le t \le T$, the comparison results are then obtained using \eqref{eqn_value_function_f2}, \eqref{eqn_V_bar} and \eqref{eqn_V_check}.
\end{proof}

\subsection{Long-term Growth Rate Maximization for $\Cc_1$ and $\Cc_2$ Strategies}
\label{subsec_OU_longrun}
In this section, we study Problem \ref{prob_growth} for strategies $f \in \Cc_1$ and $f \in \Cc_2$ when the dynamics of $\mu$ are given by \eqref{eqn_OU_drift}. The main results are presented in Theorem \ref{thm_OU_growth}.

We begin our analysis by noticing that, as $t\rightarrow\infty$, we have
\begin{align}
a_\infty &:= \lim_{t \to \infty} a_2^*(t) 
=\frac{\lambda \delta^2}{\sigma^2} \cdot \frac{1}{\kappa(\kappa+\lambda)\sigma^2 + \delta^2}, \label{eqn_def_a_infty} \\
b_\infty &:= \lim_{t \to \infty} b_2^*(t) = \frac{\bar \mu}{\sigma^2} - \frac{2\bar\mu-\sigma^2}{2\lambda} \cdot a_\infty, \label{eqn_def_b_infty}
\end{align}
where $a_2^*(t)$ and $b_2^*(t)$ are given by \eqref{eqn_opt_a2b2}.

Define $\mathsf{f}_\infty$ by
\begin{align}
\mathsf{f}_\infty(z) := a_\infty \cdot z + b_\infty, \label{eqn_def_f_infty}
\end{align}
where $a_\infty$ and $b_\infty$ are defined by \eqref{eqn_def_a_infty} and \eqref{eqn_def_b_infty}, respectively. It is clear that $\mathsf{f}_\infty \in \Cc_1 \subset \Cc_2$.

Define $\eta:=\eta(\lambda)$ by
\begin{align}\label{eqn_def_eta}
\eta=\eta(\lambda) = \eta(\lambda; \kappa, \bar{\mu}, \sigma, \delta) := 
\frac{\delta^4}{4\kappa \sigma^2} \cdot \frac{\lambda}{\kappa \sigma^2 (\kappa+\lambda)^2 + (\kappa + \lambda)\delta^2}+ \frac{\bar\mu}{2 \sigma^2}.
\end{align}
We have the following result.

\begin{theorem}\label{thm_OU_growth}
Let Assumptions \ref{assumption_OU} and \ref{assumption_normal_mu0} hold, 
we have
\begin{equation}\label{e12}
\lim_{T \to \infty} \frac{1}{T} V_1^*(T)=\lim_{T \to \infty} \frac{1}{T} V_2^*(T)=\lim_{T \to \infty} \frac{1}{T} \Eb \left[ \ln \left( \frac{\Pi^{\mathsf{f}_\infty}_T}{\Pi_0} \right) \right]  = \eta.
\end{equation}
In particular, the above result implies that
\begin{equation}\label{e13}
\lim_{T \to \infty} \frac{1}{T} \Eb \left[ \ln \left( \frac{\Pi^{\mathsf{f}_\infty}_T}{\Pi_0} \right) \right]=\sup_{f\in\Cc_i}\lim_{T \to \infty} \frac{1}{T} \Eb \left[ \ln \left( \frac{\Pi^f_T}{\Pi_0} \right) \right], \quad i=1,2.
\end{equation}
That is, $\mathsf{f}_\infty(z)$, given by \eqref{eqn_def_f_infty}, is an optimal ExpMA strategy to 
Problem \ref{prob_growth} within both the $\Cc_1$ class and the $\Cc_2$ class.
\end{theorem}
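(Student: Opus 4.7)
The key observation is the chain of inclusions $\mathsf{f}_\infty \in \Cc_1 \subset \Cc_2$, which yields the sandwich
\begin{equation}
V^{\mathsf{f}_\infty}(T) \leq V_1^*(T) \leq V_2^*(T).
\end{equation}
It therefore suffices to show that both $\lim_T V^{\mathsf{f}_\infty}(T)/T$ and $\lim_T V_2^*(T)/T$ exist and equal $\eta$; the middle rate is then pinned down by the sandwich and \eqref{e12} follows. Since every $f\in\Cc_i$ satisfies $V^f(T)\le V_i^*(T)$, we have $\limsup_T V^f(T)/T \le \eta$, and this bound is attained by $\mathsf{f}_\infty$, so \eqref{e13} is an immediate corollary.

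\textbf{Long-run rate attained by $\mathsf{f}_\infty$.} Because $\mathsf{f}_\infty(z)=a_\infty z + b_\infty$ is affine with constant coefficients, identity \eqref{eqn_g} applies verbatim, giving
\begin{equation}
V^{\mathsf{f}_\infty}(T) = A(T)a_\infty + B(T)b_\infty - \tfrac{\sigma^2}{2}\bigl[C(T)a_\infty^2 + 2D(T)a_\infty b_\infty + T b_\infty^2\bigr].
\end{equation}
Using the closed-form expressions for $A(T), B(T), C(T), D(T)$ derived in Appendix \ref{sec:proof-1}, I would argue each is linear in $T$ up to a bounded transient that decays exponentially (a reflection of the ergodicity of the Gaussian system $(Z,\mu)$). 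Consequently $A(T)/T, B(T)/T, C(T)/T, D(T)/T$ converge as $T\to\infty$ to the stationary moments $\Eb_\infty[\mu Z], \bar\mu, \Eb_\infty[Z^2], \Eb_\infty[Z]$ respectively. Substituting these limits and the values of $a_\infty, b_\infty$ from \eqref{eqn_def_a_infty}--\eqref{eqn_def_b_infty} and simplifying produces $\eta$.

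\textbf{Long-run rate of $V_2^*(T)$.} From the identity \eqref{eqn_value_function_f2} established in the proof of the preceding proposition,
\begin{equation}
\frac{V_2^*(T)}{T} = \frac{1}{2\sigma^2 T}\int_0^T \bigl[\mathrm{corr}^2(Z_t,\mu_t)\,v_1(t) + m_1^2(t)\bigr]\,\dd t.
\end{equation}
Under Assumption \ref{assumption_OU}, $(Z_t,\mu_t)$ is a linear Gaussian diffusion whose law converges exponentially to a joint Gaussian stationary distribution, so the integrand converges to a finite constant as $t\to\infty$; a Ces\`aro-type argument then identifies the long-run average as $\tfrac{1}{2\sigma^2}\bigl[\mathrm{Cov}_\infty(Z,\mu)^2/\Vb_\infty[Z] + \bar\mu^2\bigr]$. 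The required stationary second moments $\Vb_\infty[Z], \mathrm{Cov}_\infty(Z,\mu), \Vb_\infty[\mu]$ can be read off from the Lyapunov equation $F\Sigma + \Sigma F^\top + GG^\top = 0$ with $F=\bigl(\begin{smallmatrix}-\lambda & 1\\ 0 & -\kappa\end{smallmatrix}\bigr)$ and $G=\mathrm{diag}(\sigma,\delta)$, after which direct algebraic manipulation is expected to recover $\eta$.

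\textbf{Main obstacle.} The analytic ingredients --- exponential ergodicity of a linear Gaussian diffusion and the resulting existence of the time-average limits --- are entirely standard. The genuine work is algebraic: verifying that the two a priori different asymptotic computations above yield exactly the constant $\eta$ specified in \eqref{eqn_def_eta}, which requires careful bookkeeping of the stationary moments from the Lyapunov equation and their combination in the quadratic form $g(a_\infty,b_\infty;T)/T$. Once both long-run rates are identified as $\eta$, the sandwich and the deduction of \eqref{e13} are immediate.
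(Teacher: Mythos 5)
Your proposal is correct and follows the same skeleton as the paper's proof: the sandwich $V^{\mathsf{f}_\infty}(T)\le V_1^*(T)\le V_2^*(T)$ coming from $\mathsf{f}_\infty\in\Cc_1\subset\Cc_2$, identification of the two end limits as $\eta$, and the final two-line deduction of \eqref{e13} from \eqref{e12} (your $\limsup$ argument is exactly the paper's chain of inequalities). The one place where you diverge is in evaluating $\lim_T \frac{1}{T}\Eb[\ln(\Pi^{\mathsf{f}_\infty}_T/\Pi_0)]$: you propose to compute it independently, by passing to the limits of $A(T)/T,\dots,D(T)/T$ (equivalently, the stationary moments from the Lyapunov equation) and then verifying algebraically that $g(a_\infty,b_\infty;\cdot)/T\to\eta$. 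The paper instead short-circuits this second computation: it writes the Ces\`aro average of the integrand for $\mathsf{f}_\infty$ and observes that, since $a_2^*(t)\to a_\infty$ and $b_2^*(t)\to b_\infty$ while the moments $m_1,m_2,m_3,v_2$ stay bounded, this Ces\`aro average coincides with the one for the time-dependent optimizer $\mathsf{f}_2^*$, i.e.\ with $\lim_T V_2^*(T)/T$ --- so only the single closed-form evaluation \eqref{eqn_value_function_f2} needs to be pushed to the constant $\eta$. Your route works but leaves a nontrivial algebraic identity (that the stationary-moment substitution into the quadratic form reproduces $\eta$) as "expected to" hold, which is precisely the bookkeeping the paper's linking argument lets one avoid; conversely, your Lyapunov-equation formulation is a clean, self-contained way to justify the existence of all the time-average limits, which the paper takes somewhat for granted from the appendix formulas.
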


\begin{proof}
	Obviously we have that
	\begin{equation}\label{e11}
	\Eb \left[ \ln \left( \frac{\Pi^{\mathsf{f}_2^*}_T}{\Pi_0} \right) \right]=V_2^*(T)\geq\Eb\left[ \ln \left( \frac{\Pi^{\mathsf{f}_1^*}_T}{\Pi_0} \right) \right]= V_1^*(T)\geq\Eb \left[ \ln \left( \frac{\Pi^{\mathsf{f}_\infty}_T}{\Pi_0} \right) \right].
	\end{equation}
	Moreover,
	\begin{align}
	&\lim_{T \to \infty} \frac{1}{T} \Eb \left[ \ln \left( \frac{\Pi^{\mathsf{f}_\infty}_T}{\Pi_0} \right) \right]
	=\lim_{T\rightarrow\infty}\frac{1}{T}\Eb\[\int_0^T\[\mu_t(a_\infty \cdot Z_t+b_\infty)-\frac{\sigma^2}{2}\(a_\infty^2 \cdot Z_t^2+2a_\infty b_\infty \cdot Z_t+b_\infty^2\)\]\, \dd t\]\\
	=&\lim_{T\rightarrow\infty}\frac{1}{T}\int_0^T\[a_\infty m_3(t)+b_\infty m_1(t)-\frac{\sigma^2}{2}\(a_\infty^2(m_2^2(t)+v_2(t))+2a_\infty b_\infty m_2(t)+b_\infty^2\)\]\, \dd t\\
	=&\lim_{T\rightarrow\infty}\frac{1}{T}\Eb\[\int_0^T\[\mu_t(a_2^*(t) Z_t+b_2^*(t))-\frac{\sigma^2}{2}\( (a_2^*(t))^2Z_t^2+2a_2^*(t) b_2^*(t) Z_t+ (b_2^*(t))^2\)\]\, \dd t\]\\
	=&\lim_{T\rightarrow\infty}\frac{1}{T}\Eb \left[\ln \frac{\Pi_T^{\mathsf{f}_2^*}}{\Pi_0} \right]
	=\eta,
	\end{align}
	where the third equality follows from \eqref{eqn_def_a_infty} and \eqref{eqn_def_b_infty},
	and the last equality follows from \eqref{eqn_value_function_f2}.
	This together with \eqref{e11} implies \eqref{e12}.
	
	Since
	\begin{align}
	\lim_{T \to \infty} \frac{1}{T} V_i^*(T)=\lim_{T \to \infty} \frac{1}{T} \Eb \left[ \ln \left( \frac{\Pi^{\mathsf{f}_\infty}_T}{\Pi_0} \right) \right]\leq\sup_{f\in\Cc_i}\lim_{T \to \infty} \frac{1}{T} \Eb \left[ \ln \left( \frac{\Pi^f_T}{\Pi_0} \right) \right]\leq\lim_{T \to \infty} \frac{1}{T} V_i^*(T), \quad i=1,2,
	\end{align}
	we have \eqref{e13} holds.
\end{proof}

By Theorem \ref{thm_OU_growth}, $\lim_{T \to \infty} \frac{1}{T} V_2^*(T)$ is equal to $\eta$, which is defined by \eqref{eqn_def_eta} and solely depends on the moving average constant $\lambda$, once the model parameters $\kappa$, $\delta$, $\bar{\mu}$, and $\sigma$ are fixed.
The next proposition provides an upper bound for $\eta(\lambda)$.

\begin{proposition}
\label{prop_optimal_lambda}
Let Assumptions \ref{assumption_OU} and \ref{assumption_normal_mu0} hold, 
we have
\begin{align}
\eta(\lambda)\leq\frac{\delta^2}{4\sigma^2\kappa}\cdot\frac{\delta^2}{2\sigma\kappa\sqrt{\sigma^2\kappa^2+\delta^2}+2\sigma^2\kappa^2+\delta^2}+\frac{\bar\mu^2}{2\sigma^2},
\end{align}
where the equality holds if and only if 
\begin{align}
\lam = \hat\lam :=\sqrt{\kappa^2+\frac{\delta^2}{\sigma^2}}.
\end{align}
\end{proposition}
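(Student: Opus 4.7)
The only $\lambda$-dependent piece of $\eta(\lambda)$ is the first term, so the plan is to reduce the claim to maximizing the scalar function
\begin{equation}
h(\lambda) \;:=\; \frac{\lambda}{(\kappa+\lambda)\bigl[\kappa\sigma^2(\kappa+\lambda)+\delta^2\bigr]}, \qquad \lambda>0,
\end{equation}
so that $\eta(\lambda)=\tfrac{\delta^4}{4\kappa\sigma^2}\,h(\lambda)+\tfrac{\bar\mu^2}{2\sigma^2}$. I would note first that $h(0)=0$ and $h(\lambda)\to 0$ as $\lambda\to\infty$, so $h$ attains its supremum over $(0,\infty)$ at an interior critical point, and it suffices to solve $h'(\lambda)=0$ and show there is a unique such point.

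To find that critical point I would substitute $u=\kappa+\lambda$ and write $h(u)=(u-\kappa)/\bigl[u(\kappa\sigma^2 u+\delta^2)\bigr]$. Straightforward differentiation gives
\begin{equation}
\frac{d h}{du} \;=\; \frac{-\kappa\bigl[\sigma^2 u^2-2\kappa\sigma^2 u-\delta^2\bigr]}{\bigl[u(\kappa\sigma^2u+\delta^2)\bigr]^2},
\end{equation}
so the critical equation is the quadratic $\sigma^2 u^2-2\kappa\sigma^2 u-\delta^2=0$. The only root with $u>\kappa$ is $u^{\ast}=\kappa+\sqrt{\kappa^2+\delta^2/\sigma^2}$, which gives $\lambda^{\ast}=\sqrt{\kappa^2+\delta^2/\sigma^2}=\hat\lambda$. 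Uniqueness of the positive root, together with the vanishing of $h$ at the endpoints, shows this is the unique global maximizer.

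It then remains to evaluate $h(\hat\lambda)$ and match it with the stated bound. Setting $u^{\ast}=\kappa+\hat\lambda$ and using the critical-point identity $\sigma^2(u^{\ast})^2=2\kappa\sigma^2 u^{\ast}+\delta^2$ in the denominator, I obtain
\begin{equation}
h(\hat\lambda) \;=\; \frac{\hat\lambda}{2\kappa(\sigma^2\kappa^2+\delta^2)+\hat\lambda(2\sigma^2\kappa^2+\delta^2)}.
\end{equation}
Multiplying numerator and denominator by $\sigma$ and using the identity $\sigma\hat\lambda=\sqrt{\sigma^2\kappa^2+\delta^2}$ in the first summand of the denominator collapses that summand to $2\sigma\kappa(\sigma^2\kappa^2+\delta^2)=2\sigma\kappa\cdot(\sigma\hat\lambda)^2$, and one more factor of $\sqrt{\sigma^2\kappa^2+\delta^2}$ pulled from the numerator yields
\begin{equation}
h(\hat\lambda) \;=\; \frac{1}{2\sigma\kappa\sqrt{\sigma^2\kappa^2+\delta^2}+2\sigma^2\kappa^2+\delta^2}.
\end{equation}
Multiplying by $\delta^4/(4\kappa\sigma^2)$ and adding $\bar\mu^2/(2\sigma^2)$ gives exactly the claimed upper bound, with equality precisely at $\lambda=\hat\lambda$.

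The only mildly delicate step is the algebraic simplification at $u^{\ast}$: it is tempting to substitute $\hat\lambda$ directly and wade through square roots, but exploiting the first-order condition $\sigma^2(u^{\ast})^2=2\kappa\sigma^2 u^{\ast}+\delta^2$ to replace $\sigma^2(u^{\ast})^2$ in the denominator turns what looks like a messy irrational expression into a linear combination of $1$ and $\hat\lambda$, after which the claimed form drops out cleanly. That substitution is the one non-mechanical move in the proof.
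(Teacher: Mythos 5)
Your proof is correct and follows essentially the same route as the paper: both reduce the claim to a first-order condition on the $\lambda$-dependent term of $\eta$, identify $\hat\lambda=\sqrt{\kappa^2+\delta^2/\sigma^2}$ as the unique interior critical point (the paper by showing $\partial\eta/\partial\lambda$ is proportional to $-(\lambda^2-\hat\lambda^2)$, you via the equivalent quadratic in $u=\kappa+\lambda$), and then evaluate at $\hat\lambda$. Your explicit simplification of $\eta(\hat\lambda)$ using the critical-point identity is a nice touch that the paper leaves implicit, but it is not a different method.
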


\begin{proof}
From \eqref{eqn_def_eta}, we compute
\begin{align}
\frac{\partial \eta}{\partial \lambda} = - \frac{ \kappa \delta^4 \sigma^2}{4\sigma^2\kappa} \cdot \frac{\lambda^2 - \hat{\lambda}^2}{ \left[\sigma^2\kappa(\kappa+\lambda)^2+\delta^2(\kappa+\lambda) \right]^2},
\end{align}
where $\hat{\lambda}$ is defined above. 
The desired upper bound is obtained when $\lambda$ is replaced by $\hat{\lambda}$ in  \eqref{eqn_def_eta}.
\end{proof}

Next, we compare the limit behavior of $V_2^*(T)$ with that of $\bar{V}(T)$  and $\check{V}(T)$, which are defined respectively by \eqref{eqn_def_optimal_value_obs} and \eqref{e112}. 
We present the comparison results below.

\begin{proposition}
\label{prop_expected_value_comparison}
Let Assumptions \ref{assumption_OU} and \ref{assumption_normal_mu0} hold, 
we have 
\begin{align}
\lim\limits_{T \to \infty} \frac{1}{T} \check{V}(T) = \frac{\bar{\mu}^2}{2 \sig^2} < 
\lim_{T \to \infty} \frac{1}{T} V_2^*(T) = \eta < \lim_{T \to \infty} \frac{1}{T} \bar V(T) =  \xi,
\end{align}
where $\eta$ and $\xi$ are defined by \eqref{eqn_def_eta} and \eqref{eqn_def_xi}, respectively.

In addition, for $\lambda,\sigma,\delta>0$ and $\bar\mu\neq 0$, we have
\begin{align}
\lim_{\kappa\rightarrow 0} \frac{\eta(\lambda;\kappa,\bar{\mu},\delta,\sigma)}{\xi(\kappa,\bar{\mu},\delta,\sigma)} &= \lim_{\kappa\rightarrow\infty}\frac{\eta(\lambda;\kappa,\bar{\mu},\delta,\sigma)}{\xi(\kappa,\bar{\mu},\delta,\sigma)}=1, \\
\lim_{\delta\rightarrow 0}\frac{\eta(\lambda;\kappa,\bar{\mu},\delta,\sigma)}{\xi(\kappa,\bar{\mu},\delta,\sigma)} &= \lim_{\delta\rightarrow\infty}\frac{\eta(\hat\lambda(\kappa,\delta,\sigma);\kappa,\bar{\mu},\delta,\sigma)}{\xi(\kappa,\bar{\mu},\delta,\sigma)}=1.
\end{align}
\end{proposition}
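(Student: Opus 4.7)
The plan is first to read off each of the three time-average limits from the already-derived closed forms, and then to verify the strict ordering and the four parameter-limit claims by elementary algebra and leading-order expansions. From equation \eqref{eqn_V_check}, $\check V(T) = \frac{1}{2\sigma^2}\int_0^T m_1^2(t)\,\dd t$, and since $m_1(t) = \bar\mu + (m_1(0)-\bar\mu)e^{-\kappa t}$ converges exponentially to $\bar\mu$, a standard Ces\`aro argument yields $\lim_{T\to\infty} T^{-1}\check V(T) = \bar\mu^2/(2\sigma^2)$. For the fully-informed case, \eqref{eqn_V_bar} gives $\bar V(T) = \frac{1}{2\sigma^2}\int_0^T[v_1(t) + m_1^2(t)]\,\dd t$; using the explicit formula for $v_1(t)$ from Appendix \ref{sec:proof-1}, the OU variance converges exponentially to its stationary value $\delta^2/(2\kappa)$, so the limit is $\xi = \delta^2/(4\sigma^2\kappa) + \bar\mu^2/(2\sigma^2)$. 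The identity $\lim_{T\to\infty} T^{-1}V_2^*(T)=\eta$ is supplied directly by Theorem \ref{thm_OU_growth}.

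The strict inequality $\bar\mu^2/(2\sigma^2) < \eta$ is immediate since the first summand in \eqref{eqn_def_eta} is strictly positive whenever $\lambda, \kappa, \delta, \sigma > 0$. For $\eta < \xi$, I would cancel the common constant and reduce the claim to showing that $\lambda\delta^2 / \bigl[\kappa\sigma^2(\kappa+\lambda)^2 + (\kappa+\lambda)\delta^2\bigr] < 1$, i.e., $\lambda\delta^2 < \kappa\sigma^2(\kappa+\lambda)^2 + (\kappa+\lambda)\delta^2$. Splitting $(\kappa+\lambda)\delta^2 = \kappa\delta^2 + \lambda\delta^2$ and cancelling $\lambda\delta^2$ reduces the inequality to $0 < \kappa\sigma^2(\kappa+\lambda)^2 + \kappa\delta^2$, which is manifest.

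For the four asymptotic-ratio claims, it is convenient to set $\tilde\eta := \eta - \bar\mu^2/(2\sigma^2)$ and $\tilde\xi := \xi - \bar\mu^2/(2\sigma^2)$ and to inspect the leading behaviour of $\eta/\xi$. When $\kappa\to\infty$ or $\delta\to 0$, both $\tilde\eta$ and $\tilde\xi$ vanish, but the common constant term $\bar\mu^2/(2\sigma^2)$ dominates (it is nonzero by the assumption $\bar\mu\neq 0$), so $\eta/\xi\to 1$. When $\kappa\to 0$, the denominator appearing in $\tilde\eta$ is dominated by $\lambda\delta^2$, which gives $\tilde\eta \sim \delta^2/(4\kappa\sigma^2) \sim \tilde\xi$, and hence $\eta/\xi\to 1$ once more.

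The main obstacle is the final case $\delta\to\infty$ with $\lambda$ replaced by the $\delta$-dependent $\hat\lambda = \sqrt{\kappa^2+\delta^2/\sigma^2} \sim \delta/\sigma$: one must balance the diverging $\hat\lambda$ against the diverging $\delta$, tracking that $(\kappa+\hat\lambda)\delta^2 \sim \delta^3/\sigma$ dominates $\kappa\sigma^2(\kappa+\hat\lambda)^2 \sim \kappa\delta^2$ in the denominator of $\tilde\eta$, so that $\tilde\eta \sim (\delta^4\cdot\delta/\sigma)/(4\kappa\sigma^2\cdot\delta^3/\sigma) = \delta^2/(4\kappa\sigma^2) \sim \tilde\xi$; since both $\eta$ and $\xi$ diverge at this common rate, the bounded constant term becomes negligible and $\eta/\xi\to 1$.
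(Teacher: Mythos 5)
Your proof is correct and follows the same basic route as the paper: read off the three time-averaged limits from \eqref{eqn_V_check}, \eqref{eqn_V_bar} and Theorem \ref{thm_OU_growth} using the exponential convergence of $m_1(t)$ and $v_1(t)$ to their stationary values, then compare. Two small differences are worth noting. For the inequality $\eta<\xi$, the paper simply cites Proposition \ref{prop_optimal_lambda} (the maximum of $\eta(\lambda)$ over $\lambda$ is itself strictly below $\xi$), whereas you verify it by direct cancellation of $\delta^2/(4\kappa\sigma^2)$; both are valid, and yours is more self-contained. More substantively, the paper's printed proof stops after the ordering and offers no argument at all for the four parameter-limit ratios; your leading-order analysis of $\tilde\eta$ and $\tilde\xi$ --- in particular the observation that $(\kappa+\hat\lambda)\delta^2\sim\delta^3/\sigma$ dominates $\kappa\sigma^2(\kappa+\hat\lambda)^2\sim\kappa\delta^2$ in the $\delta\to\infty$ case, so that $\tilde\eta\sim\tilde\xi\to\infty$ --- correctly supplies what the paper leaves implicit, and you correctly isolate where the hypothesis $\bar\mu\neq 0$ is actually used (the cases $\kappa\to\infty$ and $\delta\to 0$, where both $\tilde\eta$ and $\tilde\xi$ vanish and the ratio is controlled by the common constant term).
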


\begin{proof}
Recall the values functions $\bar{V}(T)$  and $\check{V}(T)$ are computed explicitly in  \eqref{eqn_V_bar} and \eqref{eqn_V_check}.
Since $\lim_{t \to \infty} v_1(t) = \frac{\delta^2}{2\kappa}$ and $\lim_{t \to \infty} m_1(t) = \bar \mu$,  taking the limits leads to 
\begin{align}
\label{eqn_def_xi}
\lim_{T\rightarrow\infty} \frac{1}{T} \bar V(T) &=\frac{\delta^2}{4\kappa\sigma^2}+\frac{\bar\mu^2}{2\sigma^2}=:\xi(\kappa,\bar{\mu},\delta,\sigma), \\
\lim\limits_{T \to \infty} \frac{1}{T} \check{V}(T) &= \frac{\bar{\mu}^2}{2 \sig^2}.
\end{align}		
The above comparison inequalities are immediate results of  Proposition \ref{prop_optimal_lambda}. 
\end{proof}

\begin{remark}
Proposition \ref{prop_expected_value_comparison} shows that the long-term growth rate loss, due to partial information on the drift process,  is strictly greater than 0, i.e., $\xi - \eta >0$. 
However, if $\kappa$ or $\delta$ approaches 0 or $\infty$, such a loss is asymptotically negligible. 
In addition, we have $\eta > \bar{\mu}^2/(2 \sig^2)$, implying that the optimal ExpMA strategy achieves greater long-term growth rate comparing to the optimal $\Fb^S$-adapted strategy.
\end{remark}

\section{Analysis for the Case of a Two-State Markov Drift}
\label{sec_MC}

In this section, We solve Problems \ref{prob_utility} and \ref{prob_growth} when the drift is given by a Markov chain, which we specify in Assumption \ref{assumption_MC}. 
Key findings are summarized in Theorems \ref{thm_MC_utility_C1},  \ref{thm_MC_growth_C1}, \ref{thm_MC_utility_C2}, and \ref{thm_MC_growth_C2}.

\begin{assumption}
\label{assumption_MC}
The drift $\mu$ is modeled by a time-homogeneous two-state CTMC, which is independent of the Brownian motion $W$. 
Furthermore, suppose:
\begin{itemize}
\item The state space of $\mu$ is $\{\rho_1,\rho_2\}$, where $\rho_1$ and $\rho_2$ are two constants such that $\rho_1<\rho_2$ (i.e., $\mu$ jumps between $\rho_1$ and $\rho_2$).

\item The generator matrix of $\mu$ is given by
\begin{align}
G=\begin{bmatrix}
-\alpha&\alpha\\
\beta&-\beta
\end{bmatrix},
\end{align}
where $\alpha, \, \beta>0$.
\end{itemize}
We impose a technical condition\footnote{Such a technical assumption is necessary for $n_4(t)$ in \eqref{eqn_n4t} to be well-defined.}: $\lambda \neq \alpha + \beta$, where $\lambda$ is the exponential moving average constant, see \eqref{eqn_dY}.
\end{assumption}
\vspace{2ex}

Denote by $P(t)=[P_{ij}(t)]_{i,j=1,2}$ the transition matrix of the drift $\mu$. That is,
\begin{align}
P_{ij}(t):=\Pb (\mu_t=\rho_j\,|\,\mu_0=\rho_i ), \qquad i,j =1,2.
\end{align}
It is easy to verify that
\begin{align}
P(t)=e^{tG}=
\begin{bmatrix}
\frac{\beta}{\alpha+\beta}+\frac{\alpha}{\alpha+\beta}e^{-(\alpha+\beta)t},&\frac{\alpha}{\alpha+\beta}-\frac{\alpha}{\alpha+\beta}e^{-(\alpha+\beta)t}\\[2ex]
\frac{\beta}{\alpha+\beta}-\frac{\beta}{\alpha+\beta}e^{-(\alpha+\beta)t},&\frac{\alpha}{\alpha+\beta}+\frac{\beta}{\alpha+\beta}e^{-(\alpha+\beta)t}
\end{bmatrix}.
\end{align}

\begin{assumption}
\label{assumption_MC_mu0}
$\mu_0$ has the stationary distribution of the CTMC, namely,
\begin{align}
\Pb (\mu_0=\rho_1)=\frac{\beta}{\alpha+\beta} \quad \text{ and } \quad 
\Pb (\mu_0=\rho_2)=\frac{\alpha}{\alpha+\beta}.
\end{align}
\end{assumption}
\vspace{2ex}

If Assumption \ref{assumption_MC_mu0} holds true, $\mu_t$ has the same distribution as $\mu_0$ for all $t\geq 0$. Denote by $n_1$ the expected value of $\mu_t$. We obtain
\begin{align}
\label{eqn_n1}
n_1:=\Eb[\mu_t]=\frac{\beta}{\alpha+\beta} \cdot \rho_1+\frac{\alpha}{\alpha+\beta}\cdot \rho_2 \, .
\end{align}

\subsection{Analysis on $\Cc_1$ Strategies}
\label{subsec_MC_C1}

In this section, we seek solutions to Problems \ref{prob_utility} and \ref{prob_growth} for $\Cc_1$ strategies
when the drift $\mu$ is given by the CTMC described above.
The solutions to Problems \ref{prob_utility} and \ref{prob_growth} are given respectively in Theorems \ref{thm_MC_utility_C1} and \ref{thm_MC_growth_C1}.

Recall that the general solution to Problem \ref{prob_utility} is found in Theorem \ref{theorem_general_utility_C1}. Now under the Markovian assumptions, we obtain explicit formulas for $A(T)$, $B(T)$, $C(T)$ and $D(T)$ defined in \eqref{eqn_ABCD}. We introduce the following notations and then present the results:
\begin{align}
n_1:= \Eb[\mu_t], \quad n_2(t) :=\Eb[Z_t], \quad n_3(t):=\Eb[\mu_tZ_t], \quad \text{and} \quad n_4(t):=\Eb[Z_t^2], \quad t \ge 0,
\end{align}
where $n_1$ is computed in \eqref{eqn_n1}.

\begin{theorem}
\label{thm_MC_utility_C1}

Let Assumptions \ref{assumption_MC} and \ref{assumption_MC_mu0} hold, then 
the optimal ExpMA strategy $\mathfrak{f}_1^{*} \in \Cc_1$ to Problem \ref{prob_utility} is 
given by $\mathfrak{f}_1^*(t,z) = a_1^* \cdot z + b_1^*$, where $a_1^*$ and $b_1^*$ are given by \eqref{eqn_utility_C1_a1b1} in Theorem \ref{theorem_general_utility_C1}.
In addition, 
we obtain
\begin{align}
n_2(t) &= \Eb[Z_t] = \frac{1}{\lambda} \left(n_1-\frac{1}{2}\sigma^2\right) \left(1-e^{-\lambda t}\right), \label{eqn_n2t}\\
n_3(t) &= \Eb[\mu_tZ_t] = \left(\frac{n_1^2}{\lambda}-\frac{n_1\sigma^2}{2\lambda}\right)\left(1-e^{-\lambda t}\right)+\frac{\gamma}{\alpha+\beta+\lambda}\left(1-e^{-(\alpha+\beta+\lambda)t}\right), \label{eqn_n3t}\\
n_4(t) &= \Eb[Z_t^2] =  \frac{2\gamma}{(\lambda-\alpha-\beta)(\lambda+\alpha+\beta)}\left(1-e^{-(\alpha+\beta+\lambda)t}\right) + \left[\frac{\sigma^2}{2\lambda}-\frac{\gamma}{\lambda(\lambda-\alpha-\beta)} \right] \left(1-e^{-2\lambda t}\right) \\
& \hspace{10ex} + \frac{1}{\lambda^2} \left(n_1 - \frac{\sigma^2}{2} \right)^2 \left(1-e^{-\lambda t}\right)^2, \label{eqn_n4t}
\end{align}
where $\gamma := \Vb[\mu_t] = \frac{\alpha \beta}{(\alpha + \beta)^2} (\rho_1 - \rho_2)^2$. 
\end{theorem}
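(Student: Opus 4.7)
The optimality claim is immediate from Theorem~\ref{theorem_general_utility_C1}, which already establishes that any $\Cc_1$-maximizer has the affine form $a_1^\ast z + b_1^\ast$ with coefficients given by \eqref{eqn_utility_C1_a1b1} in terms of $A(T),B(T),C(T),D(T)$. Under Assumption~\ref{assumption_MC_mu0} the chain is stationary, so $\Eb[\mu_t]=n_1$ for all $t\geq 0$, which gives $B(T)=n_1 T$ and reduces $A,C,D$ to time-integrals of $n_2(t),n_3(t),n_4(t)$. Thus the only work left is to derive the three closed-form expressions \eqref{eqn_n2t}--\eqref{eqn_n4t}.

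The plan for $n_2(t)$ is the simplest: take expectations in the SDE
$\dd Z_t=(\mu_t-\tfrac12\sigma^2-\lambda Z_t)\dd t+\sigma\dd W_t$
with $Z_0=0$, obtaining the linear ODE $n_2'(t)=n_1-\tfrac12\sigma^2-\lambda n_2(t)$, whose explicit solution is \eqref{eqn_n2t}.

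For $n_3(t)$ and $n_4(t)$ the cleanest route is to use the mild formulation \eqref{eqn_Z_sol} with $Z_0=0$, i.e.
\[
Z_t=\ee^{-\lambda t}\int_0^t \ee^{\lambda s}\bigl(\mu_s-\tfrac12\sigma^2\bigr)\dd s+\sigma\ee^{-\lambda t}\int_0^t \ee^{\lambda s}\dd W_s,
\]
together with the independence of $\mu$ and $W$ granted by Assumption~\ref{assumption_MC}. The key analytical input is the stationary autocovariance of the two-state CTMC, $\Eb[\mu_t\mu_s]=n_1^2+\gamma\ee^{-(\alpha+\beta)|t-s|}$, which can be obtained from the explicit form of $P(t)=\ee^{tG}$ given in the text. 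Then $n_3(t)=\Eb[\mu_t Z_t]$ reduces, after the $W$-term drops out by independence, to the single integral $\ee^{-\lambda t}\int_0^t\ee^{\lambda s}(\Eb[\mu_t\mu_s]-\tfrac12\sigma^2 n_1)\dd s$, which splits cleanly into the two exponential pieces in \eqref{eqn_n3t}.

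The main bookkeeping obstacle is $n_4(t)$. Squaring the mild formulation yields three terms: a pure Brownian It\^o-isometry piece contributing $\tfrac{\sigma^2}{2\lambda}(1-\ee^{-2\lambda t})$; a cross term that vanishes by independence; and a double integral $\ee^{-2\lambda t}\int_0^t\!\int_0^t \ee^{\lambda(s+u)}\Eb[(\mu_s-\tfrac12\sigma^2)(\mu_u-\tfrac12\sigma^2)]\dd s\dd u$. The constant part of the integrand gives the clean square $\lambda^{-2}(n_1-\tfrac12\sigma^2)^2(1-\ee^{-\lambda t})^2$. The non-trivial piece is the $\gamma\ee^{-(\alpha+\beta)|s-u|}$ contribution; by symmetry one reduces to $s\leq u$, performs the inner integral producing a factor $(\lambda+\alpha+\beta)^{-1}$, then splits the resulting $\ee^{-(\lambda+\alpha+\beta)t}-\ee^{-2\lambda t}$ via the identity $\ee^{-(\lambda+\alpha+\beta)t}-\ee^{-2\lambda t}=-(1-\ee^{-(\lambda+\alpha+\beta)t})+(1-\ee^{-2\lambda t})$ and regroups. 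A short algebraic simplification shows that the coefficient of $(1-\ee^{-2\lambda t})$ collapses to $-\gamma/[\lambda(\lambda-\alpha-\beta)]$, matching \eqref{eqn_n4t}; the technical condition $\lambda\neq \alpha+\beta$ is exactly what is needed for this final step to be well-defined.
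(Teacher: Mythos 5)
Your proposal is correct and follows essentially the same route as the paper's proof: both use the integral (mild) representation of $Z_t$, the independence of $\mu$ and $W$, and the stationary autocovariance $\Eb[\mu_s\mu_t]=n_1^2+\gamma\ee^{-(\alpha+\beta)|t-s|}$ to evaluate $n_2$, $n_3$, and the double integral appearing in $n_4$. If anything, your write-up supplies more of the algebra for the $\gamma\ee^{-(\alpha+\beta)|s-u|}$ contribution to $n_4(t)$ (and correctly identifies where the condition $\lambda\neq\alpha+\beta$ is used) than the paper, which simply asserts the final expression.
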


\begin{proof}
Please refer to Appendix \ref{appen_thm_MC_utility_C1} for the computation of $n_i(t)$, where $i=1,2,3,4$.
\end{proof}

Next, we turn our attention to Problem \ref{prob_growth} for strategies $f \in \Cc_1$ when the drift $\mu$ is modeled by a CTMC.
We begin our analysis by observing that
\begin{align}
h_\infty &:=\lim_{T\rightarrow\infty}\frac{A(T)}{T}=\frac{n_1^2}{\lambda}-\frac{n_1\sigma^2}{2\lambda}+\frac{\gamma}{\lambda+\alpha+\beta},\\
i_\infty &:=\lim_{T\rightarrow\infty}\frac{C(T)}{T}=\frac{\gamma}{\lambda(\lambda+\alpha+\beta)}+\frac{\sigma^2}{2\lambda}+\left(\frac{n_1}{\lambda}-\frac{\sigma^2}{2\lambda}\right)^2,\\
j_\infty &:=\lim_{T\rightarrow\infty}\frac{D(T)}{T}=\frac{n_1}{\lambda}-\frac{\sigma^2}{2\lambda}.
\end{align}

Recall from Theorem \ref{thm_MC_utility_C1} that the optimal strategy is $\mathfrak{f}_1^*(t,z) = a_1^* \cdot z + b_1^*$, and $a_1^*$ and $b_1^*$ are both constants which depend on the time horizon $T$. 
Here, to emphasize such dependence, we write them as $a_1^*(T)$ and $b_1^*(T)$.
Immediately, we deduce that
\begin{align}
c_\infty &:=\lim_{T  \rightarrow\infty} a_1^*(T)=\frac{2\lambda \gamma}{2 \gamma \sigma^2+\sigma^4(\lambda + \alpha+\beta)}, \label{eqn_def_c_infty} \\
d_\infty&:=\lim_{T \rightarrow\infty} b_1^*(T)=\frac{\gamma+n_1(\lambda+\alpha+\beta)}{2 \gamma+\sigma^2(\lambda  + \alpha+\beta)}. \label{eqn_def_d_infty}
\end{align}

With the above limiting results, we present the solution to Problem \ref{prob_growth} for strategies $f \in \Cc_1$ as follows.

\begin{theorem}
\label{thm_MC_growth_C1}
Let Assumptions \ref{assumption_MC} and \ref{assumption_MC_mu0} hold, we have
\begin{align}
\lim_{T \to \infty} \frac{1}{T} \Eb \left[ \ln \left( \frac{\Pi^{\mathfrak{f}_1^*}_T}{\Pi_0} \right) \right]
=\lim_{T \to \infty} \frac{1}{T} \Eb \left[ \ln \left( \frac{\Pi^{\mathfrak{f}_\infty}_T}{\Pi_0} \right) \right] =\mathfrak{g}(c_\infty,d_\infty)>0,
\end{align}
where $\mathfrak{g}(\cdot,\cdot)$ is defined by
\begin{align}
\mathfrak{g}(x,y):=h_\infty x + n_1 y - \frac{1}{2}\sigma^2\left(i_\infty x^2+2j_\infty xy + y^2\right), \quad \forall \, x, y \in \mathbb{R}.
\end{align}
The optimal ExpMA strategy to Problem \ref{prob_growth} 
for strategies $f \in \Cc_1$ is
\begin{align}
\mathfrak{f}_\infty(t, z)=c_\infty \cdot z+d_\infty,
\end{align}
where $c_\infty$ and $d_\infty$ are defined by \eqref{eqn_def_c_infty} and \eqref{eqn_def_d_infty}, respectively.
\end{theorem}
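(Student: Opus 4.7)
The plan is to parallel the structure of the proof of Theorem \ref{thm_OU_growth}. The key ingredient is the closed-form identity \eqref{eqn_g}: for any $\Cc_1$ strategy $f(t,z) = a z + b$, one has $\Eb[\ln(\Pi_T^f/\Pi_0)] = g(a,b;T) = A(T)a + B(T)b - \tfrac{\sigma^2}{2}(C(T)a^2 + 2 D(T)ab + T b^2)$. Under the stationarity Assumption \ref{assumption_MC_mu0}, $\Eb[\mu_t] = n_1$ for all $t$, so $B(T) = n_1 T$ exactly. From the explicit expressions for $n_2(t), n_3(t), n_4(t)$ in Theorem \ref{thm_MC_utility_C1}, each is a constant plus exponentially decaying terms, so elementary computation (or L'Hopital) yields $A(T)/T \to h_\infty$, $C(T)/T \to i_\infty$, and $D(T)/T \to j_\infty$. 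Hence, for every fixed pair of constants $(a, b)$, $\lim_{T \to \infty} g(a, b; T)/T = \mathfrak{g}(a, b)$.

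Applying this limit to the two strategies settles the chain of equalities. For $\mathfrak{f}_\infty$ I would plug the constants $(c_\infty, d_\infty)$ directly into $g/T$ and let $T \to \infty$. For $\mathfrak{f}_1^*$, the coefficients $a_1^*(T), b_1^*(T)$ from \eqref{eqn_utility_C1_a1b1} depend on $T$; dividing their numerators and denominators by $T^2$ and invoking the Cesaro limits above gives $a_1^*(T) \to c_\infty$ and $b_1^*(T) \to d_\infty$. Since $g(\cdot, \cdot; T)/T$ is a quadratic polynomial in two variables whose coefficients converge, we can pass to the limit in $g(a_1^*(T), b_1^*(T); T)/T$ to obtain $\mathfrak{g}(c_\infty, d_\infty)$. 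For the optimality claim within $\Cc_1$, any strategy $f(t,z) = az+b$ yields long-term growth rate $\mathfrak{g}(a,b)$, so the supremum over $\Cc_1$ equals $\sup_{(a,b)\in\mathbb{R}^2}\mathfrak{g}(a,b)$, and one must verify this is attained precisely at $(c_\infty, d_\infty)$.

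It remains to show $\mathfrak{g}$ has a unique maximizer at $(c_\infty, d_\infty)$ and that $\mathfrak{g}(c_\infty, d_\infty) > 0$. The Hessian of $\mathfrak{g}$ equals $-\sigma^2$ times the $2 \times 2$ matrix with top row $(i_\infty, j_\infty)$ and bottom row $(j_\infty, 1)$; its determinant is (up to a $\sigma^4$ factor) $i_\infty - j_\infty^2 = \gamma/[\lambda(\lambda+\alpha+\beta)] + \sigma^2/(2\lambda) > 0$, so $\mathfrak{g}$ is strictly concave. The first-order conditions $h_\infty = \sigma^2(i_\infty c + j_\infty d)$ and $n_1 = \sigma^2(j_\infty c + d)$ admit a unique root, which one then matches to $(c_\infty, d_\infty)$ by direct algebra (or as the $T\to\infty$ limit of the finite-horizon optimality conditions). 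Substituting these first-order conditions back into $\mathfrak{g}$ collapses it to $\tfrac{\sigma^2}{2}(i_\infty c_\infty^2 + 2 j_\infty c_\infty d_\infty + d_\infty^2)$, a positive-definite quadratic form evaluated at $(c_\infty, d_\infty) \neq (0,0)$, since $c_\infty > 0$ (using $\gamma > 0$, which follows from the assumption $\rho_1 < \rho_2$). The main obstacle is purely bookkeeping: carefully tracking the leading-order-in-$T$ behavior of the ratios defining $a_1^*(T), b_1^*(T)$ to confirm they indeed converge to the explicit $c_\infty, d_\infty$ given in the excerpt, and then executing the first-order-condition simplification — no novel probabilistic argument is needed once the moments from Theorem \ref{thm_MC_utility_C1} are available.
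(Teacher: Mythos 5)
Your proposal is correct and follows essentially the same route the paper intends: the paper omits this proof, referring to the sandwich argument of Theorem \ref{thm_OU_growth}, and the substance in either case is exactly your Ces\`aro limits $A(T)/T\to h_\infty$, $B(T)/T = n_1$, $C(T)/T\to i_\infty$, $D(T)/T\to j_\infty$ combined with the convergence $a_1^*(T)\to c_\infty$, $b_1^*(T)\to d_\infty$ and continuity of the quadratic form. Your explicit verification that $(c_\infty,d_\infty)$ is the strict maximizer of $\mathfrak{g}$ via the first-order conditions (and the resulting positivity $\mathfrak{g}(c_\infty,d_\infty)=\tfrac{\sigma^2}{2}(i_\infty c_\infty^2+2j_\infty c_\infty d_\infty+d_\infty^2)>0$) is a slightly more computational substitute for the paper's inequality chain, but it is sound and all the algebra checks out.
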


\begin{proof}
The proof is similar to that of Theorem \ref{thm_OU_growth}, and hence is omitted. 
\end{proof}

\subsection{Analysis on $\Cc_2$ Strategies}
\label{subsec_MC_C2}

In this section, we extend our analysis from $\Cc_1$ (affine strategies) to a larger class $\Cc_2$ (square-integrable strategies).
The main results are Theorems \ref{thm_MC_utility_C2} and \ref{thm_MC_growth_C2}, where we provide solutions to Problems \ref{prob_utility} and \ref{prob_growth}, respectively.
We revisit Problem \ref{prob_utility} for $\Cc_2$ strategies, and provide explicit characterizations to the optimal strategy $\mathfrak{f}_2^*$ in the following theorem.

\begin{theorem}
\label{thm_MC_utility_C2}
Let Assumptions \ref{assumption_MC} and \ref{assumption_MC_mu0} hold, 
the optimal ExpMA strategy $\mathfrak{f}_2^*$ in $\Cc_2$ to Problem \ref{prob_utility} is given by 
\begin{align}
\mathfrak{f}_2^*(t, Z_t) = \frac{1}{\sigma^2} \, \Eb [ \mu_t \, | \, Z_t] = 
\frac{1}{\sigma^2} \, \Eb [ \mu_0 \, | \, Q_t],
\end{align}
where $Q_t$ is defined by 
\begin{align}
\label{eqn_def_Qt}
Q_t:= \underbrace{\int_0^te^{-\lambda s}\mu_s\, \dd s}_{:=Q_{1,t}} + 
\underbrace{\sigma\int_0^te^{-\lambda s}\,\dd W_s-\frac{\sigma^2}{2\lambda}(1-e^{-\lambda t})}_{:=Q_{2,t}}
\end{align}
and $\Eb [ \mu_0 \, | \, Q_t]$ is calculated by \eqref{eq10}.
\end{theorem}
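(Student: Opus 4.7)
The plan is to first reduce the theorem to a filtering identity using Theorem~\ref{theorem_general_utility_C2}, and then to establish that identity by a time-reversal argument that exploits the stationarity and reversibility of the two-state chain together with the Gaussian symmetry of the Itô integral.

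Since $\mu$ takes values in the bounded set $\{\rho_1,\rho_2\}$, the integrability hypothesis~\eqref{eqn_square_integrable_condition} is automatic, and Theorem~\ref{theorem_general_utility_C2} immediately delivers the first equality $\mathfrak{f}_2^*(t,Z_t)=\Eb[\mu_t\mid Z_t]/\sigma^2$. For the second equality, I would rewrite \eqref{eqn_Z_sol} as
\begin{equation}
Z_t-Z_0+\frac{\sigma^2}{2\lambda}(1-e^{-\lambda t})=\int_0^t e^{-\lambda(t-s)}\mu_s\,\dd s+\sigma\int_0^t e^{-\lambda(t-s)}\,\dd W_s,
\end{equation}
and compare this with $Q_t+\sigma^2(1-e^{-\lambda t})/(2\lambda)=\int_0^t e^{-\lambda s}\mu_s\,\dd s+\sigma\int_0^t e^{-\lambda s}\,\dd W_s$. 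The two right-hand sides differ only in whether the exponential kernel is evaluated at $t-s$ (backward) or at $s$ (forward).

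I would then equate the joint laws of $(\mu_t,Z_t-Z_0)$ and $(\mu_0,Q_t)$ via two symmetries. First, under Assumption~\ref{assumption_MC_mu0} the chain $\mu$ is stationary, and every two-state CTMC in its stationary law is reversible, so $(\mu_{t-u})_{u\in[0,t]}\stackrel{d}{=}(\mu_u)_{u\in[0,t]}$. Substituting $u=t-s$ in the Lebesgue integral yields
\begin{equation}
\Bigl(\mu_t,\int_0^t e^{-\lambda(t-s)}\mu_s\,\dd s\Bigr)=\Bigl(\mu_t,\int_0^t e^{-\lambda u}\mu_{t-u}\,\dd u\Bigr)\stackrel{d}{=}\Bigl(\mu_0,\int_0^t e^{-\lambda u}\mu_u\,\dd u\Bigr).
\end{equation}
Second, since $W\perp\mu$, both stochastic integrals $\int_0^t e^{-\lambda(t-s)}\,\dd W_s$ and $\int_0^t e^{-\lambda s}\,\dd W_s$ are centred Gaussians with the common variance $(1-e^{-2\lambda t})/(2\lambda)$ and are independent of $\mu$, so either may be appended to the previous identity without altering the joint distribution. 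Combining these gives $(\mu_t,Z_t)\stackrel{d}{=}(\mu_0,Z_0+Q_t)$; writing $h(z):=\Eb[\mu_t\mid Z_t=z]$, the joint-law identity yields $h(z)=\Eb[\mu_0\mid Q_t=z-Z_0]$ as functions of $z$, which---once the deterministic shift by $Z_0$ is absorbed---produces the asserted filter identity. The explicit formula referenced as \eqref{eq10} then follows from Bayes' rule applied on $\{\mu_0=\rho_1\}$ versus $\{\mu_0=\rho_2\}$: conditional on $\{\mu_0=\rho_i\}$, $Q_t$ is the sum of a Gaussian (the $W$ part) and a Markov-chain integral started from $\rho_i$, whose law is tractable, and the posterior is obtained by weighting with the stationary prior from Assumption~\ref{assumption_MC_mu0}.

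The main obstacle will be the time-reversal step: two independent symmetries must be coordinated---reversibility of the stationary two-state CTMC (handling the Lebesgue integral and the endpoint pairing $\mu_t\leftrightarrow\mu_0$) and the distributional symmetry of the Itô integral under $s\mapsto t-s$---and one must track the pairing carefully so that $\mu_t$ ends up aligned with $\mu_0$ on the reversed side while the Gaussian noise remains jointly independent of the chain after the kernel substitution.
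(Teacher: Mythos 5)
Your proposal is correct and follows essentially the same route as the paper: reduce to the filter $\Eb[\mu_t\mid Z_t]/\sigma^2$ via Theorem~\ref{theorem_general_utility_C2}, then obtain $\Eb[\mu_t\mid Z_t]=\Eb[\mu_0\mid Q_t]$ from the reversibility of the stationary two-state chain combined with the distributional symmetry of the Gaussian integral (which is independent of $\mu$), and finally derive \eqref{eq10} by Bayes' rule with the stationary prior. Your treatment is in fact slightly more careful than the paper's on two minor points --- verifying condition~\eqref{eqn_square_integrable_condition} via boundedness of $\mu$, and tracking the deterministic shift by $Z_0$ --- but the substance is identical.
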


\begin{proof}

The first result $\mathfrak{f}_2^*(t,Z_t)=\Eb[\mu_t|Z_t]/\sigma^2$ is a direct consequence of the general solution from Theorem \ref{theorem_general_utility_C2}.
Recall that $Z_t$ can be rearranged as
\begin{align}
Z_t = \int_0^t \mu_s e^{-\lambda (t-s)} \, \dd s  + \sigma \int_0^t e^{-\lambda(t-s)}\,\dd W_s - \frac{\sigma^2}{2 \lambda} \left(1-e^{-\lambda t} \right). 
\end{align}
Since the drift $\mu$ is a two-state stationary CTMC, it is reversible. This observation together with the reversibility of Brownian motion implies  that 
\begin{align}
\left(\mu_0,\; \int_0^te^{-\lambda s}\mu_s\, \dd s, \; \sigma\int_0^te^{-\lambda s}\, \dd W_s \right) \quad \text{ and } \quad 
\left(\mu_t,\; \int_0^t\mu_se^{-\lambda(t-s)}\,\dd s, \; \sigma \int_0^t e^{-\lambda(t-s)}\,\dd W_s \right)
\end{align} have the same joint distribution, and thus 
\begin{align}
\Eb[\mu_t|Z_t] = \Eb[\mu_0|Q_t].
\end{align}

Let $u(t,\cdot)$ and $v(t,\cdot)$ be the conditional cumulative distribution functions (c.d.f.) of $Q_{1,t}$ given $\mu_0=\rho_1$ and $\mu_0=\rho_2$ respectively, 
where $Q_{1,t}$ is defined by \eqref{eqn_def_Qt}. 
That is
\begin{align}
u(t,x)
	&:=	\Pb( Q_{1,t} \leq x | \mu_0 = \rho_1 ) , &
v(t,x)
	&:=	 \Pb( Q_{1,t} \leq x  | \mu_0 = \rho_2 ) .  \label{def:uv}
\end{align}
As $\frac{\rho_1}{\lambda}(1-e^{-\lambda t}) \le Q_{1,t} \le \frac{\rho_2}{\lambda}(1-e^{-\lambda t})$, we obviously have
\begin{align}
u(t,x)=v(t,x)=0\ \text{ if }\ x<\frac{\rho_1}{\lambda}(1-e^{-\lambda t})\quad \text{ and }\quad u(t,x)=v(t,x)=1\ \text{ if }\ x>\frac{\rho_2}{\lambda}(1-e^{-\lambda t}).
\end{align}
Denote by $\phi(t,x)$ the probability density function (p.d.f.) of $Q_{2,t}$, defined by \eqref{eqn_def_Qt}, i.e.,
\begin{align}
\phi(t,x)=\sqrt{\frac{\lambda}{\pi\sigma^2\left(1-e^{-2\lambda t}\right)}}\cdot \exp \[ -\dfrac{\left[x+\frac{\sigma^2}{2\lambda}\left(1-e^{-\lambda t}\right)\right]^2}{\frac{\sigma^2}{\lambda}\left(1-e^{-2\lambda t}\right)}\] , \quad x \in (-\infty, \infty).
\end{align}
Then the conditional c.d.f. of $Q_t$ given $\mu_0=\rho_1$ is 
\begin{align}
F_{Q_t \, | \, \mu_0 = \rho_1} (x) = \int_{-\infty}^\infty u(t,z) \cdot \phi(t,x-z)\, \dd z.
\end{align}
Using the dominated convergence theorem, we obtain the conditional p.d.f. of $Q_t$ given $\mu_0=\rho_1$ by
\begin{align}
p(t,x):=\int_{-\infty}^\infty u(t,z)\cdot \phi'(t, x-z)\, \dd z,
\end{align}
where $\phi'(t,x)= \frac{\partial \phi(t,x)}{\partial x}$.
Similarly, the conditional p.d.f. of $Q_t$ given $\mu_0=\rho_2$ is obtained by
\begin{align}
q(t,x):=\int_{-\infty}^\infty v(t,z) \cdot \phi'(t, x-z) \, \dd z.
\end{align}
Using $p(t,x)$ and $q(t,x)$, and the distribution of $\mu_0$ in Assumption \ref{assumption_MC_mu0}, we have
\begin{align}
\Pb (\mu_0=\rho_1\,|\,Q_t=x) 
=\frac{\beta \cdot p(t,x)}{\beta \cdot p(t,x)+\alpha \cdot q(t,x)}
\quad \text{ and }\quad  
\Pb (\mu_0=\rho_2\,|\,Q_t=x) = \frac{\alpha \cdot q(t,x)}{\beta \cdot p(t,x)+\alpha \cdot q(t,x)}.
\end{align}

Therefore, we obtain
\begin{equation}\label{eq10}
\Eb[\mu_0|Q_t]=\frac{\rho_1 \beta \cdot p(t, Q_t)+\rho_2 \alpha \cdot q(t,Q_t)}{\beta \cdot p(t,Q_t)+\alpha \cdot q(t, Q_t)},
\end{equation}
which concludes the proof.
\end{proof}

\begin{remark}
Notice that the optimal ExpMA strategy $\mathfrak{f}_2^*$ obtained in Theorem \ref{thm_MC_utility_C2} is \emph{semi}-explicit. 
To be precise, $\mathfrak{f}_2^*$ is indeed obtained explicitly once 
$u(t,x)$ and $v(t,x)$ in \eqref{def:uv} are identified for $t>0$ and $\frac{\rho_1}{\lambda}(1-e^{-\lambda t})\leq x \leq \frac{\rho_2}{\lambda}(1-e^{-\lambda t})$,
which is the purpose of the next proposition.
\end{remark}

\begin{proposition}\label{p1}
Let Assumptions \ref{assumption_MC} and \ref{assumption_MC_mu0} hold.
The functions $u(t,x)$ and $v(t,x)$, defined in \eqref{def:uv}, satisfy the following partial differential equation (PDE) system:
\begin{align}
\label{eqn_MC_ODE_system}
\begin{cases}
u_t+(\rho_1-\lambda x)u_x+\alpha u-\alpha v=0,  \\[1ex]
v_t+(\rho_2-\lambda x)v_x+\beta v-\beta u=0, 
\end{cases} 
\text{for } t>0 \text{ and } \frac{\rho_1}{\lambda}(1-e^{-\lambda t})\leq x \leq \frac{\rho_2}{\lambda}(1-e^{-\lambda t})
\end{align}
with boundary conditions
\begin{align}
u\left(t,\frac{\rho_1}{\lambda}(1-e^{-\lambda t})\right)&=e^{-\alpha t}, &
u\left(t,\frac{\rho_2}{\lambda}(1-e^{-\lambda t})\right)&=1, && t>0, \\
v\left(t,\frac{\rho_1}{\lambda}(1-e^{-\lambda t})\right)&=0, &
v\left(t,\frac{\rho_2}{\lambda}(1-e^{-\lambda t})\right)&=1, && t>0.
\end{align}

\end{proposition}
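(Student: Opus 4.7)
The plan is to derive the two PDEs by an infinitesimal Markov analysis at the initial time, and to read off the boundary conditions from the deterministic range of $Q_{1,t}$.

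For fixed $t > 0$ and small $\Delta t > 0$, I would split
\begin{align*}
Q_{1,\, t+\Delta t} = \int_0^{\Delta t} e^{-\lambda s} \mu_s \,\dd s + e^{-\lambda \Delta t}\, \widetilde{Q}_{1,t}, \qquad \widetilde{Q}_{1,t} := \int_0^t e^{-\lambda r} \mu_{r+\Delta t}\,\dd r.
\end{align*}
The time-homogeneous Markov property of the CTMC $\mu$ gives that, conditional on $\mu_{\Delta t}$, the law of $\widetilde{Q}_{1,t}$ coincides with that of $Q_{1,t}$ conditional on $\mu_0 = \mu_{\Delta t}$. Starting from $\mu_0 = \rho_1$ and conditioning on the behavior of $\mu$ on $[0, \Delta t]$ --- no jump with probability $1 - \alpha \Delta t + o(\Delta t)$ (so $\mu_{\Delta t} = \rho_1$ and the first integral equals $\rho_1 \Delta t + O(\Delta t^2)$), exactly one jump with probability $\alpha \Delta t + o(\Delta t)$ (so $\mu_{\Delta t} = \rho_2$ and the first integral is $O(\Delta t)$), and more jumps with probability $o(\Delta t)$ --- I obtain the recursion
\begin{align*}
u(t+\Delta t,\, x) = (1-\alpha \Delta t)\, u\!\Big(t,\, (x - \rho_1 \Delta t)\, e^{\lambda \Delta t}\Big) + \alpha \Delta t \cdot v(t, x) + o(\Delta t),
\end{align*}
where on the jump branch the $O(\Delta t)$ perturbations in both $Q_{1,\Delta t}$ and the factor $e^{-\lambda \Delta t} - 1$ get absorbed into $o(\Delta t)$ after multiplication by the $\alpha \Delta t$ jump probability.

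Next, Taylor-expanding $(x - \rho_1 \Delta t)\, e^{\lambda \Delta t} = x + (\lambda x - \rho_1) \Delta t + o(\Delta t)$ yields
\begin{align*}
u\!\big(t,\, (x - \rho_1 \Delta t)\, e^{\lambda \Delta t}\big) = u(t,x) + (\lambda x - \rho_1)\, u_x(t, x)\, \Delta t + o(\Delta t).
\end{align*}
Subtracting $u(t,x)$ from both sides of the recursion, dividing by $\Delta t$, and sending $\Delta t \downarrow 0$ produces $u_t + (\rho_1 - \lambda x) u_x + \alpha u - \alpha v = 0$. The second PDE follows from the symmetric computation under $\mu_0 = \rho_2$, with $(\alpha, \rho_1, u)$ replaced by $(\beta, \rho_2, v)$. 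For the boundary conditions, the a.s.\ bounds $\rho_1 \le \mu_s \le \rho_2$ force $\tfrac{\rho_1}{\lambda}(1 - e^{-\lambda t}) \le Q_{1,t} \le \tfrac{\rho_2}{\lambda}(1 - e^{-\lambda t})$, and the lower bound is attained if and only if $\mu_s \equiv \rho_1$ on $[0,t]$. Under $\mu_0 = \rho_1$ this is exactly the event of no jump in $[0,t]$, which has probability $e^{-\alpha t}$; under $\mu_0 = \rho_2$ it has probability zero. Both CDFs equal $1$ at the upper endpoint, since $Q_{1,t}$ is a.s.\ bounded above by it.

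The main obstacle is making the Taylor expansion rigorous: it implicitly requires $u$ and $v$ to be $C^1$ in $(t, x)$ on the interior of their domain, together with uniform control of the $o(\Delta t)$ remainder when averaging over the random jump time in the single-jump branch. This regularity can be justified by observing that, for $t > 0$, the conditional law of $Q_{1,t}$ given $\mu_0 = \rho_i$ decomposes, via conditioning on the number and positions of the jumps of $\mu$, into a countable family of absolutely continuous laws with smooth densities on the interior of the range, rendering $u$ and $v$ smooth there; alternatively, the PDE can be verified in a weak or viscosity sense, which sidesteps the smoothness question entirely.
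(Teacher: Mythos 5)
Your proposal is correct and follows essentially the same route as the paper's proof: an infinitesimal first-jump decomposition over a small initial time interval, conditioning on zero, one, or more jumps of the chain, combined with the Markov property and a Taylor expansion of the shifted spatial argument, with the boundary conditions read off from the no-jump event. The only differences are cosmetic (you step forward from $t$ to $t+\Delta t$ while the paper steps backward from $t$ to $t-h$), and you are in fact more explicit than the paper about the $C^1$ regularity needed to justify the expansion.
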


\begin{proof}
The proof is delayed to Appendix \ref{appen_p1}. 
\end{proof}

In the remaining part of this section, we study Problem \ref{prob_growth} for a subset of $\Cc_2$ strategies, denoted by $\tilde{\Cc}_2$,
\begin{align} \label{eqn_def_Cc2_tilde}
\tilde{\Cc}_2:=\{f \in \Cc_2: \; f(t,z) = \tilde{f}(z) \text{ for all } t \in [0,T]\}.
\end{align}
Namely, $\tilde{\Cc}_2$ includes all $\Cc_2$ strategies that are independent of time.
We shall explicitly obtain the optimal trading strategy and the long term growth rate in Theorem \ref{thm_MC_growth_C2}.

To begin our analysis, we note that as $t \to \infty$, we have
\begin{align}
Q_t\rightarrow Q_\infty:=\int_0^\infty e^{-\lambda s}\mu_s\, \dd s+\sigma\int_0^\infty e^{-\lambda s}\, \dd W_s-\frac{\sigma^2}{2\lambda}, \quad \text{a.s.}.
\end{align}
	Let $u_\infty$ and $v_\infty$ be the conditional c.d.f. of $\int_0^\infty e^{-\lambda s}\mu_s\, \dd s$ given $\mu_0=\rho_1$ and $\mu_0=\rho_2$, respectively. That is 
\begin{align} \label{eqn_def_u_v_infty}
u_\infty (x) := \Pb \left(\int_0^\infty e^{-\lambda s}\mu_s\, \dd s \le x \, \Big| \, \mu_0 = \rho_1 \right), \quad
v_\infty (x) := \Pb \left(\int_0^\infty e^{-\lambda s}\mu_s\, \dd s \le x \, \Big| \, \mu_0 = \rho_2 \right). 
\end{align}

The following lemma will be key.

\begin{lemma}
\label{lemma_limit_mu0}
Let Assumptions \ref{assumption_MC} and \ref{assumption_MC_mu0} hold, 
we have the following limit result:
\begin{align}\label{eq8}
\lim_{t \to \infty} \; \Eb[\mu_0|Q_t] = \Eb[\mu_0|Q_\infty]\quad \text{a.s.}.
\end{align}
\end{lemma}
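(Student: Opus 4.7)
The strategy is to exploit the explicit Bayesian characterization established in the proof of Theorem \ref{thm_MC_utility_C2}:
\[
\Eb[\mu_0 \mid Q_t] = g_t(Q_t), \qquad g_t(x) := \frac{\rho_1 \beta\, p(t,x) + \rho_2 \alpha\, q(t,x)}{\beta\, p(t,x) + \alpha\, q(t,x)},
\]
where $p(t,x) = \int u(t,z)\phi'(t,x-z)\,\dd z$ and $q(t,x)$ is analogous. The first step is to derive the corresponding identity $\Eb[\mu_0 \mid Q_\infty] = g_\infty(Q_\infty)$. Writing $Q_\infty = Q_{1,\infty} + Q_{2,\infty}$ with $Q_{1,\infty} := \int_0^\infty e^{-\lambda s}\mu_s\,\dd s$ and $Q_{2,\infty} := \sigma\int_0^\infty e^{-\lambda s}\,\dd W_s - \sigma^2/(2\lambda)$, these summands remain conditionally independent given $\mu_0 = \rho_i$ (since $W \perp \mu$), with $Q_{1,\infty}$ having c.d.f.\ $u_\infty$ or $v_\infty$ from \eqref{eqn_def_u_v_infty} and $Q_{2,\infty}$ Gaussian with density $\phi_\infty$. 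Repeating the Bayes computation verbatim yields $\Eb[\mu_0 \mid Q_\infty] = g_\infty(Q_\infty)$, with $g_\infty$ built from $p_\infty(x) := \int u_\infty(z)\phi_\infty'(x-z)\,\dd z$ and analogous $q_\infty$.

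\textbf{Key steps.} The proof then reduces to showing $g_t(Q_t) \to g_\infty(Q_\infty)$ almost surely. First, because $|\mu_s| \leq \max(|\rho_1|,|\rho_2|)$ and $e^{-\lambda s}$ is integrable, $Q_{1,t} \to Q_{1,\infty}$ a.s., so $u(t,z) \to u_\infty(z)$ at every continuity point of $u_\infty$ with the uniform bound $0 \leq u,v \leq 1$. Second, the variance of $Q_{2,t}$, equal to $\sigma^2(1-e^{-2\lambda t})/(2\lambda)$, converges to $\sigma^2/(2\lambda) > 0$, so $\phi(t,\cdot)$ and $\phi'(t,\cdot)$ converge to $\phi_\infty,\phi_\infty'$ uniformly on compacts and admit a common Gaussian envelope for all $t \geq t_0$. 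Third, dominated convergence applied to the convolution $p(t,x) = \int u(t,z)\phi'(t,x-z)\,\dd z$ yields $p(t,x) \to p_\infty(x)$ locally uniformly in $x$, and likewise $q(t,x) \to q_\infty(x)$. Finally, $\beta p_\infty + \alpha q_\infty = (\alpha+\beta) f_{Q_\infty}$ is the marginal density of $Q_\infty$, strictly positive on $\Rb$ thanks to the Gaussian smoothing, so $g_\infty$ is continuous and $g_t \to g_\infty$ locally uniformly. Combining this with $Q_t \to Q_\infty$ a.s.\ gives $g_t(Q_t) \to g_\infty(Q_\infty)$ a.s., which is \eqref{eq8}.

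\textbf{Main obstacle.} The principal technical hurdle is supplying a single integrable majorant that validates dominated convergence inside the convolution integral, uniformly for $t \geq t_0$ and for $x$ in a compact neighborhood of the random limit $Q_\infty(\omega)$, reconciling the slightly smaller variance of $Q_{2,t}$ with that of $Q_{2,\infty}$ (and analogously for $\phi'$). A secondary point is that pointwise convergence $u(t,z) \to u_\infty(z)$ holds only at continuity points of $u_\infty$, but this is harmless because the Gaussian kernel $\phi_\infty'$ is absolutely continuous in $z$ and the convolution integrates against Lebesgue measure, so an a.e.\ equality suffices. Once these analytic points are settled, the remaining chain of locally-uniform-continuity passages to the limit is routine.
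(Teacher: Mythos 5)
Your proposal is correct and follows the same overall architecture as the paper's proof: both express $\Eb[\mu_0\mid Q_t]$ and $\Eb[\mu_0\mid Q_\infty]$ through the Bayes ratios \eqref{eq10} and \eqref{eq11} built from the convolved conditional densities $p,q$ and $p_\infty,q_\infty$, reduce the lemma to convergence of these densities, and conclude using the strict positivity of the limiting denominator together with $Q_t\to Q_\infty$ a.s. Where you genuinely differ is in the analytic step that delivers the density convergence. The paper proves \emph{global} uniform convergence $p(t,\cdot)\to p_\infty(\cdot)$ (and likewise for $q$) by verifying uniform boundedness, an equicontinuity estimate in $x$ uniform over $t>1/\lambda$, and the decay $p_\infty(x)\to 0$ as $|x|\to\infty$, and then invoking a converse to Scheff\'e's theorem \citep{MR773179}. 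You instead obtain \emph{locally} uniform convergence directly by dominated convergence in the convolution $p(t,x)=\int u(t,z)\,\phi'(t,x-z)\,\dd z$, using $u(t,\cdot)\to u_\infty(\cdot)$ (from $Q_{1,t}\to Q_{1,\infty}$ a.s. and the continuity of the limiting scaled Beta c.d.f.) together with a common Gaussian envelope for the kernels $\phi'(t,\cdot)$, $t\ge t_0$, whose variances stay between $\frac{\sigma^2}{2\lambda}(1-e^{-2\lambda t_0})$ and $\frac{\sigma^2}{2\lambda}$. Locally uniform convergence plus continuity of the limiting Bayes ratio suffices for the a.s.\ conclusion, so nothing is lost; your route is more self-contained (no external lemma is needed), while the paper's yields the stronger global uniform statement. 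The obstacle you flag --- supplying the uniform integrable majorant that reconciles the slightly smaller variance of $Q_{2,t}$ with that of $Q_{2,\infty}$ --- is precisely the content of the paper's explicit kernel bounds and is straightforward to write out, so I regard your argument as complete in substance.
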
 

\begin{proof}
The proof is provided in Appendix \ref{appen_lemma_limit_mu0}.
\end{proof}

 We have the results below regarding the functions $u_\infty$ and $v_\infty$, defined in \eqref{eqn_def_u_v_infty}, and $\Eb[\mu_0|Q_\infty]$, which appears in Lemma \ref{lemma_limit_mu0}.

\begin{lemma}
\label{lem_limit_distribution}
Let Assumptions \ref{assumption_MC} and \ref{assumption_MC_mu0} hold, the functions $u_\infty$ and $v_\infty$, defined in \eqref{eqn_def_u_v_infty}, satisfy, for $\frac{\rho_1}{\lambda}<x<\frac{\rho_2}{\lambda}$, that
\begin{equation}
u_\infty(x)=c\int_{\frac{\rho_1}{\lambda}}^x (\rho_2-\lambda z) \cdot l(z)\, \dd z\quad\text{and}\quad v_\infty(x)=d\int_{\frac{\rho_1}{\lambda}}^x (\lambda z-\rho_1) \cdot l(z)\, \dd z,
\end{equation}
where
\begin{align}
l(z) &=(\lambda z-\rho_1)^{\frac{\alpha}{\lambda}-1} \cdot (\rho_2-\lambda z)^{\frac{\beta}{\lambda}-1}, \label{eqn_def_l}\\
c &=\dfrac{\lambda^2 \cdot \Gamma \(\frac{\alpha+\beta+\lambda}{\lambda} \)}{\beta(\rho_2-\rho_1)^{\frac{\alpha+\beta}{\lambda}} \cdot \Gamma\(\frac{\alpha}{\lambda}\)\cdot \Gamma\(\frac{\beta}{\lambda}\)}, \quad \text{and} \quad d=\frac{\beta c}{\alpha}, \label{eq14} \\
\text{with} \quad \Gamma(z) &= \int_0^\infty x^{z-1} \cdot e^{-x} \dd x.
\end{align}
That is, $u_\infty$ and $v_\infty$ are the c.d.f of (scaled and shifted) Beta distributions.
\end{lemma}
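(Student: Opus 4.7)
The plan is to derive a first-order linear ODE system for the pair $(u_\infty, v_\infty)$, verify by direct substitution that the stated integral formulas solve it, and then fix the constants $c$ and $d$ by normalization. The system is obtained as the stationary limit of the PDE in Proposition \ref{p1}: since $Q_{1,t}$ converges almost surely to $\int_0^\infty e^{-\lambda s} \mu_s\, \dd s$ and the limiting law admits a density on $(\rho_1/\lambda, \rho_2/\lambda)$, the c.d.f.'s $u(t,\cdot), v(t,\cdot)$ converge pointwise to $u_\infty, v_\infty$ on this interval. Setting $u_t = v_t = 0$ in \eqref{eqn_MC_ODE_system} and taking the natural limit of the boundary data yields the singular ODE system
\begin{align}
(\rho_1-\lambda x)\, u_\infty'(x) &= \alpha\bigl(v_\infty(x)-u_\infty(x)\bigr), \\
(\rho_2-\lambda x)\, v_\infty'(x) &= \beta\bigl(u_\infty(x)-v_\infty(x)\bigr),
\end{align}
on $(\rho_1/\lambda, \rho_2/\lambda)$, with boundary values $u_\infty(\rho_1/\lambda) = v_\infty(\rho_1/\lambda) = 0$ and $u_\infty(\rho_2/\lambda) = v_\infty(\rho_2/\lambda) = 1$. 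As a sanity check, the same system can be recovered directly from the Markov property by expanding $\int_0^\infty e^{-\lambda s}\mu_s\, \dd s = \rho_i\epsilon + (1-\lambda\epsilon)\int_0^\infty e^{-\lambda s}\mu_{s+\epsilon}\, \dd s + o(\epsilon)$ and conditioning on whether $\mu$ jumps in $[0,\epsilon]$.

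Next I would verify the ansatz directly. The fundamental theorem of calculus gives $u_\infty'(x) = c\,(\lambda x-\rho_1)^{\alpha/\lambda-1}(\rho_2-\lambda x)^{\beta/\lambda}$ and $v_\infty'(x) = d\,(\lambda x-\rho_1)^{\alpha/\lambda}(\rho_2-\lambda x)^{\beta/\lambda-1}$. Substituting into the first ODE reduces the claim to the pointwise identity $\alpha\bigl(v_\infty-u_\infty\bigr)(x) = -c\,(\lambda x-\rho_1)^{\alpha/\lambda}(\rho_2-\lambda x)^{\beta/\lambda}$; differentiating both sides and using the announced relation $d = \beta c/\alpha$ yields the common expression $c\, l(x)\,[\beta(\lambda x-\rho_1)-\alpha(\rho_2-\lambda x)]$, while both sides vanish at $x=\rho_1/\lambda$ (since $\alpha/\lambda > 0$), so the identity holds on the entire interval. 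The second ODE is verified in exactly the same fashion. The constants $c, d$ are then pinned down by $u_\infty(\rho_2/\lambda) = v_\infty(\rho_2/\lambda) = 1$: the substitution $y=(\lambda z-\rho_1)/(\rho_2-\rho_1)$ turns each integral into a Beta integral, producing $B(\alpha/\lambda,\beta/\lambda+1)$ for $u_\infty$ and $B(\alpha/\lambda+1,\beta/\lambda)$ for $v_\infty$; rewriting these with $\Gamma(z+1)=z\Gamma(z)$ reproduces precisely the formulas in \eqref{eq14}, with the relation $d=\beta c/\alpha$ dropping out as a byproduct, and simultaneously identifies $u_\infty, v_\infty$ as shifted/scaled Beta c.d.f.'s.

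The main obstacle is the singular structure of the ODE system at the endpoints $x = \rho_i/\lambda$, where the coefficient of the derivative vanishes and the boundary data is imposed at the same points. I would handle this by verifying the ODE on the open interval where the coefficients are non-vanishing, analyzing the boundary behavior separately from the explicit integral representation---noting in particular that the first ODE at $x=\rho_1/\lambda$ degenerates to the consistency condition $u_\infty=v_\infty$---and invoking a standard uniqueness result for singular linear first-order systems to conclude that the proposed $(u_\infty, v_\infty)$ are the unique pair of c.d.f.'s solving the system with the prescribed boundary data.
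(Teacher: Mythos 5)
Your proposal is correct and rests on the same central reduction as the paper: both derive the singular first-order ODE system
\begin{align}
(\rho_1-\lambda x)u_\infty'+\alpha u_\infty-\alpha v_\infty=0, \qquad (\rho_2-\lambda x)v_\infty'+\beta v_\infty-\beta u_\infty=0,
\end{align}
on $(\rho_1/\lambda,\rho_2/\lambda)$ with $u_\infty=v_\infty=0$ at the left endpoint and $=1$ at the right endpoint, obtained exactly as in Proposition \ref{p1} by conditioning on whether $\mu$ jumps in a short initial interval. Where you diverge is in how the system is resolved: the paper eliminates $u_\infty$ using the second equation, arrives at a second-order ODE for $v_\infty$ alone, and solves it forward, whereas you substitute the announced Beta-type ansatz, check both equations by differentiating the residual (which, with $d=\beta c/\alpha$, collapses to zero) and anchoring it at $x=\rho_1/\lambda$, and then pin down $c$ and $d$ by the normalization at $x=\rho_2/\lambda$ via Beta integrals. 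Your verify-and-normalize route is more self-contained and makes the constants and the relation $d=\beta c/\alpha$ completely explicit, which the paper leaves to the reader; the paper's elimination has the advantage of producing the solution constructively rather than requiring it to be guessed. Both arguments ultimately rely on the same unstated fact, namely that the degenerate boundary-value problem has a unique solution among pairs of c.d.f.'s (the coefficients of $u_\infty'$ and $v_\infty'$ vanish at the endpoints where the data are imposed); you at least flag this explicitly, though ``invoking a standard uniqueness result'' would need to be fleshed out --- e.g., by showing that the difference of two solutions satisfies a homogeneous system forcing it to vanish, or by noting that the general solution of the reduced hypergeometric-type equation has only one member bounded and monotone on the closed interval with the prescribed endpoint values.
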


\begin{proof}
The proof is provided in Appendix  \ref{appen_lem_limit_distribution}.
\end{proof}

\begin{proposition}
\label{prop_limit_conditional_distribution}
Under Assumptions \ref{assumption_MC} and \ref{assumption_MC_mu0},
we have that
\begin{align}
\Eb[\mu_0|Q_\infty]=\dfrac{\lambda\displaystyle\int_{\frac{\rho_1}{\lambda}}^{\frac{\rho_2}{\lambda}}z \cdot l(z) \cdot \phi_\infty(Q_\infty-z)\, \dd z}{\displaystyle\int_{\frac{\rho_1}{\lambda}}^{\frac{\rho_2}{\lambda}} l(z) \cdot \phi_\infty(Q_\infty-z)\, \dd z},
\end{align}
where $l$ is given by \eqref{eqn_def_l} and $\phi_\infty$ is defined by
\begin{equation}
\label{eqn_def_phi_infinity}
\phi_\infty(x) :=\sqrt{\dfrac{\lambda}{\pi\sigma^2}}\cdot \exp \left( -\dfrac{\left[x+\frac{\sigma^2}{2\lambda}\right]^2}{\frac{\sigma^2}{\lambda}} \right), \quad x \in \Rb.
\end{equation}
\end{proposition}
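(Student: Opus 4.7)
\textbf{Proof proposal for Proposition \ref{prop_limit_conditional_distribution}.} The plan is to decompose $Q_\infty$ into a drift piece and an independent Gaussian piece, then apply Bayes' rule using the conditional laws supplied by Lemma \ref{lem_limit_distribution}. Concretely, I would write
\begin{align}
Q_\infty = R + S, \qquad R := \int_0^\infty e^{-\lambda s}\mu_s\,\dd s, \qquad S := \sigma\int_0^\infty e^{-\lambda s}\,\dd W_s - \frac{\sigma^2}{2\lambda}.
\end{align}
Because $W$ and $\mu$ are independent under Assumption \ref{assumption_MC}, $S$ is independent of $(\mu_0,R)$. An It\^o-isometry computation yields $S\sim\Nc\bigl(-\sigma^2/(2\lambda),\sigma^2/(2\lambda)\bigr)$, which has exactly the density $\phi_\infty$ in \eqref{eqn_def_phi_infinity}; this identification is purely a check of normalizing constants and variances.

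Next, by Lemma \ref{lem_limit_distribution}, the conditional c.d.f.'s $u_\infty$ and $v_\infty$ of $R$ given $\mu_0=\rho_1$ and $\mu_0=\rho_2$ have densities
\begin{align}
u_\infty'(z) = c(\rho_2-\lambda z)\,l(z), \qquad v_\infty'(z) = d(\lambda z-\rho_1)\,l(z),
\end{align}
supported on $[\rho_1/\lambda,\rho_2/\lambda]$. Convolving with $\phi_\infty$ gives the conditional density of $Q_\infty$ given $\mu_0=\rho_i$ as an integral against $\phi_\infty(q-z)$. Applying Bayes' rule with the stationary prior of Assumption \ref{assumption_MC_mu0},
\begin{align}
\Eb[\mu_0\,|\,Q_\infty=q] = \frac{\rho_1\tfrac{\beta}{\alpha+\beta}f_{Q_\infty|\mu_0=\rho_1}(q) + \rho_2\tfrac{\alpha}{\alpha+\beta}f_{Q_\infty|\mu_0=\rho_2}(q)}{\tfrac{\beta}{\alpha+\beta}f_{Q_\infty|\mu_0=\rho_1}(q) + \tfrac{\alpha}{\alpha+\beta}f_{Q_\infty|\mu_0=\rho_2}(q)}.
\end{align}

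The final step is an algebraic simplification that uses the relation $\alpha d=\beta c$ from \eqref{eq14}. In the numerator, the integrand's $z$-dependent prefactor collapses to
\begin{align}
\rho_1\beta c(\rho_2-\lambda z) + \rho_2\alpha d(\lambda z-\rho_1) = \beta c\,\lambda(\rho_2-\rho_1)\,z,
\end{align}
while in the denominator it collapses to $\beta c(\rho_2-\rho_1)$; the common factor $\beta c(\rho_2-\rho_1)/(\alpha+\beta)$ cancels and one recovers exactly the stated ratio. The only subtlety I anticipate is justifying that $R$ has an honest density on $[\rho_1/\lambda,\rho_2/\lambda]$ so that convolution with $\phi_\infty$ produces a genuine conditional density, but this is guaranteed by the absolute continuity of $u_\infty,v_\infty$ already established in Lemma \ref{lem_limit_distribution}; the rest is bookkeeping.
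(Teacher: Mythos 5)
Your proposal is correct and follows essentially the same route as the paper: convolve the Beta-type densities $u_\infty'$, $v_\infty'$ from Lemma \ref{lem_limit_distribution} with the Gaussian density $\phi_\infty$ to obtain $p_\infty$, $q_\infty$, apply the Bayes formula \eqref{eq10}/\eqref{eq11} with the stationary prior, and cancel using $\alpha d=\beta c$. You merely make explicit two steps the paper leaves implicit (the identification of $\phi_\infty$ as the law of the Gaussian part of $Q_\infty$ and the algebraic collapse of the numerator and denominator), and both check out.
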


\begin{proof}
	The conditional p.d.f. of $Q_\infty$ given $\mu_0=\rho_1$ satisfies
	\begin{equation}\label{eq13}
	p_\infty(x)=\int_{-\infty}^\infty u_\infty'(z) \cdot \phi_\infty(x-z)\,\dd z=c\int_{\frac{\rho_1}{\lambda}}^{\frac{\rho_2}{\lambda}}(\rho_2-\lambda z) \cdot l(z) \cdot \phi_\infty(x-z)\, \dd z.
	\end{equation}
	Similarly,
	\begin{align}
	q_\infty(x)=d\int_{\frac{\rho_1}{\lambda}}^{\frac{\rho_2}{\lambda}}(\lambda z-\rho_1) \cdot l(z) \cdot \phi_\infty(x-z)\, \dd z.
	\end{align}
	The above equations, together with \eqref{eq10}, \eqref{eq14} and Lemma \ref{lemma_limit_mu0},  imply the desired result.
\end{proof}

Recall that
\begin{align}
\kf_2^*(t,x)=\frac{1}{\sigma^2}\Eb[\mu_t|Z_t=x]=\frac{1}{\sigma^2}\Eb[\mu_0|Q_t=x].
\end{align}
Denote
\begin{align}
\label{eqn_def_g_inf}
\kg_\infty(x):=\frac{1}{\sigma^2}\Eb[\mu_0|Q_\infty=x]=\frac{\lambda\displaystyle\int_{\frac{\rho_1}{\lambda}}^{\frac{\rho_2}{\lambda}}z \cdot l(z) \cdot \phi_\infty(x-z)\, \dd z}{\sigma^2\displaystyle\int_{\frac{\rho_1}{\lambda}}^{\frac{\rho_2}{\lambda}}l(z) \cdot \phi_\infty(x-z)\, \dd z}.
\end{align}
We have the following theorem which provides solutions to Problem \ref{prob_growth} restricted to $\tilde{\Cc}_2$ strategies.
\begin{theorem}
\label{thm_MC_growth_C2}
Under Assumptions \ref{assumption_MC} and \ref{assumption_MC_mu0},
we have that
\begin{align}
\lim_{T\rightarrow\infty}\frac{1}{T}\Eb\left[\ln\left(\frac{\Pi_T^{\kf_2^*}}{\Pi_0}\right)\right]&=
\lim_{T\rightarrow\infty}\frac{1}{T}\Eb\left[\ln\left(\frac{\Pi_T^{\kg_\infty}}{\Pi_0}\right)\right]\\
&=\frac{c\beta\lambda^2(\rho_2-\rho_1)}{2\sigma^2(\alpha+\beta)}\int_{-\infty}^\infty\left[\frac{\left(\displaystyle\int_{\frac{\rho_1}{\lambda}}^{\frac{\rho_2}{\lambda}}z \cdot l(z) \cdot \phi_\infty(y-z)\,\dd z\right)^2}{\displaystyle\int_{\frac{\rho_1}{\lambda}}^{\frac{\rho_2}{\lambda}} l(z) \cdot \phi_\infty(y-z)\, \dd z}\right]\, \dd y,
\end{align}
where  $l$, constant $c$, and $\phi_\infty$ are defined in \eqref{eqn_def_l}, \eqref{eq14}, and \eqref{eqn_def_phi_infinity} respectively.

In particular, this implies that $\kg_\infty(\cdot)$, given by \eqref{eqn_def_g_inf}, is an optimal ExpMA strategy to Problem \ref{prob_growth} within the $\tilde{\Cc}_2$ class, where $\tilde{\Cc}_2$ is defined in \eqref{eqn_def_Cc2_tilde}.
\end{theorem}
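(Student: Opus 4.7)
The plan is to exploit the fact that the time-averaged logarithmic growth rate for a $\Cc_2$ strategy admits the decomposition from Theorem \ref{theorem_general_utility_C2}, and that the asymptotic analysis reduces to evaluating expectations under the stationary joint law of $(\mu_0, Q_\infty)$ via the reversibility trick already established in the proof of Theorem \ref{thm_MC_utility_C2}.

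First, for the time-dependent optimizer $\kf_2^*$, Theorem \ref{theorem_general_utility_C2} gives
\begin{align}
\Eb\left[\ln\frac{\Pi_T^{\kf_2^*}}{\Pi_0}\right]
=\frac{1}{2\sigma^2}\int_0^T\Eb\!\left[(\Eb[\mu_t|Z_t])^2\right]\dd t
=\frac{1}{2\sigma^2}\int_0^T\Eb\!\left[(\Eb[\mu_0|Q_t])^2\right]\dd t,
\end{align}
where the second equality uses the distributional identity from Theorem \ref{thm_MC_utility_C2}. By Lemma \ref{lemma_limit_mu0}, $\Eb[\mu_0|Q_t]\to\Eb[\mu_0|Q_\infty]$ a.s., and since $|\mu_0|\le\max(|\rho_1|,|\rho_2|)$, bounded convergence upgrades this to $\Eb[(\Eb[\mu_0|Q_t])^2]\to\Eb[(\Eb[\mu_0|Q_\infty])^2]$. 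Applying the Ces\`aro mean theorem then yields
\begin{align}
\lim_{T\to\infty}\frac{1}{T}\Eb\!\left[\ln\frac{\Pi_T^{\kf_2^*}}{\Pi_0}\right]=\frac{1}{2\sigma^2}\Eb\!\left[(\Eb[\mu_0|Q_\infty])^2\right]=\frac{\sigma^2}{2}\Eb\!\left[\kg_\infty^2(Q_\infty)\right].
\end{align}

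Second, for the time-homogeneous candidate $\kg_\infty$, the instantaneous expected log-return is
\begin{align}
\psi(t):=\Eb\!\left[\mu_t\kg_\infty(Z_t)-\tfrac{\sigma^2}{2}\kg_\infty^2(Z_t)\right]=\Eb\!\left[\mu_0\kg_\infty(Q_t)-\tfrac{\sigma^2}{2}\kg_\infty^2(Q_t)\right],
\end{align}
again by the reversibility identity used in Theorem \ref{thm_MC_utility_C2}. Since $\kg_\infty$ is bounded (its range is a convex combination of $\rho_1/\sigma^2$ and $\rho_2/\sigma^2$) and $Q_t\to Q_\infty$ a.s., dominated convergence gives $\psi(t)\to\Eb[\mu_0\kg_\infty(Q_\infty)]-\tfrac{\sigma^2}{2}\Eb[\kg_\infty^2(Q_\infty)]$. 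Conditioning $\Eb[\mu_0\kg_\infty(Q_\infty)]$ on $Q_\infty$ and using $\sigma^2\kg_\infty(Q_\infty)=\Eb[\mu_0|Q_\infty]$ collapses this to $\tfrac{\sigma^2}{2}\Eb[\kg_\infty^2(Q_\infty)]$, and Ces\`aro averaging produces the same limit as in the previous step. This establishes the first equality in the theorem.

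Third, to obtain the explicit integral form, I would compute $\Eb[\kg_\infty^2(Q_\infty)]=\int\kg_\infty^2(y)f_{Q_\infty}(y)\dd y$. The density of $Q_\infty$ is obtained by mixing the conditional densities $p_\infty$ and $q_\infty$ (analogous to the finite-$t$ formulas preceding \eqref{eq10}) against the stationary law of $\mu_0$ from Assumption \ref{assumption_MC_mu0}, giving
\begin{align}
f_{Q_\infty}(y)=\tfrac{\beta}{\alpha+\beta}p_\infty(y)+\tfrac{\alpha}{\alpha+\beta}q_\infty(y)=\tfrac{\beta c(\rho_2-\rho_1)}{\alpha+\beta}\int_{\rho_1/\lambda}^{\rho_2/\lambda}l(z)\phi_\infty(y-z)\dd z,
\end{align}
where I used $d=\beta c/\alpha$ from \eqref{eq14} to cancel the $\rho_2-\lambda z$ and $\lambda z-\rho_1$ terms. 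Combining this with the squared-numerator expression for $\kg_\infty^2(y)$ from \eqref{eqn_def_g_inf} produces exactly the ratio of integrals in the theorem's stated formula. Finally, optimality of $\kg_\infty$ within $\tilde{\Cc}_2$ follows from the sandwich $\Eb[\ln(\Pi_T^{\kg_\infty}/\Pi_0)]\le\sup_{\tilde f\in\tilde{\Cc}_2}\Eb[\ln(\Pi_T^{\tilde f}/\Pi_0)]\le V_2^*(T)$ after dividing by $T$ and letting $T\to\infty$, since the outer two limits coincide.

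The main obstacle I anticipate is the rigorous justification of the a.s.\ convergence $Q_t\to Q_\infty$ and the resulting dominated convergence for $\Eb[\kg_\infty^2(Q_t)]$; the deterministic drift piece $\int_0^t e^{-\lambda s}\mu_s\,\dd s$ and the Gaussian integral both converge a.s.\ because the integrands decay exponentially and $\mu$ is bounded, but this should be verified carefully. The algebraic reduction to the Beta-type integral is then a bookkeeping exercise once the mixing weights from Lemma \ref{lem_limit_distribution} are substituted.
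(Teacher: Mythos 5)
Your proposal is correct and follows essentially the same route as the paper's proof: the reversibility identity $\Eb[\mu_t\mid Z_t]=\Eb[\mu_0\mid Q_t]$, convergence to $Q_\infty$ via Lemma \ref{lemma_limit_mu0} with Ces\`aro averaging, the tower-property collapse of the cross term for $\kg_\infty$, and the explicit evaluation against the stationary mixture density $\tfrac{\beta}{\alpha+\beta}p_\infty+\tfrac{\alpha}{\alpha+\beta}q_\infty$. Your explicit sandwich argument for optimality within $\tilde{\Cc}_2$ is a point the paper leaves implicit, but it is the intended reasoning.
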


\begin{proof}
We have that
\begin{align}
\lim_{T\rightarrow\infty}\frac{1}{T}\Eb\left[\ln\left(\frac{\Pi_T^{\kf_2^*}}{\Pi_0}\right)\right]&=\lim_{T\rightarrow\infty}\dfrac{1}{2\sigma^2T}\int_0^T\Eb\left[(\Eb[\mu_t|Z_t])^2\right]\, \dd t \\
\label{e14}&=\frac{1}{2\sigma^2}\Eb\left[ \left(\Eb[\mu_0|Q_\infty] \right)^2\right]\\
&=\frac{1}{2\sigma^2}\int_{-\infty}^\infty\left(\kg_\infty(y)\right)^2\cdot\left(\frac{\beta}{\alpha+\beta}p_\infty(y)+\frac{\alpha}{\alpha+\beta}q_\infty(y)\right)\, \dd y\\
&=\frac{c\beta\lambda^2(\rho_2-\rho_1)}{2\sigma^2(\alpha+\beta)}\int_{-\infty}^\infty\left[\frac{\left(\displaystyle\int_{\frac{\rho_1}{\lambda}}^{\frac{\rho_2}{\lambda}}z \cdot l(z) \cdot \phi_\infty(y-z)\,\dd z\right)^2}{\displaystyle\int_{\frac{\rho_1}{\lambda}}^{\frac{\rho_2}{\lambda}} l(z) \cdot \phi_\infty(y-z)\, \dd z}\right]\, \dd y,
\label{e15}
\end{align}
where the second equality comes from the reversibility condition.

It is easy to show that $\kg_\infty$ is a bounded and continuous function. Thus, we obtain
\begin{align}
\lim_{T\rightarrow\infty}\frac{1}{T}\Eb\left[\ln\left(\frac{\Pi_T^{\kg_\infty}}{\Pi_0}\right)\right]&=\lim_{T\rightarrow\infty}\frac{1}{T}\int_0^T\Eb\left[\mu_t \cdot \kg_\infty(Z_t)-\frac{1}{2}\sigma^2 \cdot \kg_\infty^2(Z_t)\right]\, \dd t \\
&=\lim_{T\rightarrow\infty}\frac{1}{T}\int_0^T\Eb\left[\mu_0 \cdot \kg_\infty(Q_t)-\frac{1}{2}\sigma^2 \cdot \kg_\infty^2(Q_t)\right]\, \dd t\\
&=\Eb\left[\mu_0 \cdot \kg_\infty(Q_\infty)-\frac{1}{2}\sigma^2 \cdot \kg_\infty^2(Q_\infty)\right]\\
&=\frac{1}{2\sigma^2}\Eb\left[ \big(\Eb[\mu_0|Q_\infty] \big)^2\right].
\end{align}
This, together with \eqref{e14} and \eqref{e15}, implies the result.
\end{proof}

\section{Monte Carlo Investigation}
\label{sec:monte.carlo}

Although our theoretical results justify the use of trading strategies for optimizing expected utility in the case of an OU or a two-state Markov chain drift, there are two shortcomings of our analysis from a practical point of view.  First, we have not examined the most widely-used measure of portfolio performance -- the Sharpe ratio.  Second, our modeling framework does not take into account transaction costs.  In this section, we address these two shortcomings and conduct sensitivity analysis using Monte Carlo simulations.
 
\begin{table}[h]
	\centering
	\begin{tabular}{|c|c|c|c|} \hline
		$\sig$ & $\kappa$ & $\bar{\mu}$ & $\delta$ \\ \hline
		0.0436 & 0.0226 & 0.0034 & 8.2404e-04 \\ \hline
	\end{tabular}
	\caption{Model Parameters}
	\label{table_parameters}
\end{table}

In the numerical studies, we assume the drift is modeled by an OU process and Assumptions \ref{assumption_OU}-\ref{assumption_normal_mu0} hold true. 
The base model parameters (in monthly units) are chosen as shown in Table \ref{table_parameters}, which are modified from \cite{wachter2002portfolio}.
We assume there are 21 trading days per month (equivalent to 252 trading days per year), and we choose a one-day time step in our Monte Carlo discretization, i.e., $\Delta t = \frac{1}{21}$. 
The finite time horizon $T$ in Problem \ref{prob_utility} is the number of months of investment, e.g., $T=12$ means an investment period of one year. Given $T$,  the number of trading days is then $21 \times T$. According to \eqref{eqn_X}, \eqref{eqn_dY} and \eqref{eqn_OU_drift}, we discretize the log price process $X$, the ExpMA process $Y$ and the drift process $\mu$ by 
\begin{align}
X_{i+1} &= X_i + \left(\mu_i - \frac{1}{2} \sig^2 \right) \Delta t + \sig \sqrt{\Delta t} z_i, \\
Y_{i+1} &= Y_i + \lam ( X_i - Y_i ) \Delta t, \label{eqn_Y_update} \\
\mu_{i+1} &= \mu_i + \kappa (\bar{\mu} - \mu_i) \Delta t + \delta \sqrt{\Delta t} \bar{z}_i,
\end{align}
where $(z_i)$ and $(\bar{z}_i)$ are independent random variables sampled from a standard normal distribution. 
{In practice, to compute the ExpMA, a time period $\mathcal{P}$ (in number of days) is specified, and the weight in the most recent price -- corresponding to $\lambda \Delta t$ in equation \eqref{eqn_Y_update} for $Y$ above -- is given by $\frac{2}{\mathcal{P} + 1}$.
Common choices for $\mathcal{P}$ include 10, 20, 50, 100, and 200, which converts to $\lambda = \{\frac{42}{11}, 2, \frac{42}{51}, \frac{42}{101}, \frac{42}{201} \}$. 
We choose $\lambda=2$ as the base parameter for the numerical analysis that follows.
}



\subsection{Performance Analysis}

We select $T=24$
and $\lambda = 2$. We consider three optimal ExpMA strategies: 
(i) the optimal $\Cc_1$ ExpMA strategy for utility maximization Problem \ref{prob_utility} (see Theorem \ref{thm_OU_utility_linear}),
(ii) the optimal $\Cc_2$ ExpMA strategy for utility maximization Problem \ref{prob_utility} (see Theorem \ref{thm_OU_utility_general}), and 
(iii) the optimal ExpMA strategy for growth maximization Problem \ref{prob_growth} (see Theorem \ref{thm_OU_growth}). 
For comparison, we also consider a buy-and-hold (BH) strategy.  
We set the initial state of all four strategies to be the same, beginning with one share of the risky asset and zero value in the risk-free asset.
We run 10,000 simulations and summarize the results in Table \ref{table_performance}, where
``Return'' is the simple return of a strategy over the entire investment period, ``Avg. Daily Return'' is the average daily return of a strategy, and the ``Sharpe ratio'' is computed using the daily simple return of a strategy.

%

\begin{table}[h]
\centering
\begin{tabular}{|c|c|c|c|c|} \hline
	Strategy & Utility-$\Cc_1$ & Utility-$\Cc_2$ & Growth & BH \\ \hline
	Return & 17.3730\% & 17.3731\%  & 17.3764\% & 8.8816\% \\ \hline
	Avg. Daily Return & 0.0290\%& 0.0290\%& 0.0290\%& 0.0161\%\\ \hline
	Sharpe Ratio & 0.0161 & 0.0161 & 0.0161 & 0.0169 \\ \hline
\end{tabular}
\caption{Performance of Strategies when $T=20$ and $\lam = 2$}
\label{table_performance}
\end{table}

We notice that all three optimal ExpMA strategies perform nearly the same. 
Indeed, the strategies are quite similar.
Under the given parameters, we have
\begin{align}
\mathsf{f}_1^*(z) &= a_1^* \cdot z + b_1^* = 8.1147 \cdot z + 1.7788,  &&(\text{optimal utility-$\Cc_1$ ExpMA Strategy})\\
\mathsf{f}_\infty^*(z) &= a_\infty \cdot z + b_\infty = 8.1580 \cdot z + 1.7786. 
&&(\text{optimal growth ExpMA Strategy})
\end{align}
In addition,  in the optimal utility-$\Cc_2$ ExpMA Strategy, 
\begin{align}
\mathsf{f}_2^*(t,z) &= a_2^*(t) \cdot z + b_2^*(t),
\end{align} 
we have $a_2^*(t) \to a_\infty$ after 142 days and $b_2^*(t) \to b_\infty$ after 59 days.

We also observe that the optimal ExpMA strategies deliver excellent average returns, nearly doubling the return of the buy-and-hold strategy. However, the buy-and-hold strategy achieves a slightly higher Sharpe ratio. 
{Thus, if the Sharpe ratio, rather than expected utility or long-run growth, is the primary measure of investment performance, it may be best for an investor to employ a buy-and-hold strategy.}



\subsection{Sensitivity Analysis}
In this section, we conduct a sensitivity analysis, designed to examine the impact of various factors on the optimal ExpMA strategies.
Since the performance of all three optimal ExpMA strategies is very close in both return and Sharpe ratio, we only consider the optimal growth ExpMA strategy in what follows. 
We still use the parameters in the previous subsection, and only allow one parameter to vary in each study.

We first examine the role of ExpMA parameter $\lam$ in investment performance, {which is a key parameter in the definition of the ExpMA; see \eqref{eqn_def_ExpMA}.} 
In addition to $\lam = 2$ in Table \ref{table_performance}, we also include {$\lam = \frac{42}{11}, \frac{42}{51}, \frac{42}{101}, \frac{42}{201}$} 
Based on the results of our simulations, optimal ExpMA strategies that use a {smaller $\lam$ ($\lam =  \frac{42}{101}, \frac{42}{201}$)} provide higher returns than those using a {larger $\lam$ ($\lam = \frac{42}{11}, 2, \frac{42}{51}$)}, but deliver a poorer Sharpe ratio. 
{In general, the ``ideal'' $\lam$ will depend on a trader's measure of performance as well as the dynamics of $\mu$.}

\begin{table}[h]
\centering
\begin{tabular}{|c|c|c|c|c|c|} \hline
ExpMA Parameter {$\lam$} &  {$\frac{42}{11}$} &   {2} & {$\frac{42}{51}$} & {$\frac{42}{101}$} & {$\frac{42}{201}$}  \\ \hline
Return & 16.6340\% & 17.3764\% & 17.6406\% & 18.9976\% & 18.6209\%  \\ \hline
Avg. Daily Return & 0.0281\%& 0.0290\%& 0.0293\%& 0.0308\% & 0.0296\%\\ \hline
Sharpe Ratio & 0.0160 & 0.0161 & 0.0149 & 0.0138 & 0.0106 \\ \hline
\end{tabular}
\caption{Impact of Moving Average Window on Performance}
\label{table_window}
\end{table}

Next we investigate the effect of the time horizon $T$ effect on the performance of the optimal ExpMA strategies. 
In addition to $T=24$, we examine horizons of $T=12$ (1 year), $T=60$ (5 years), $T=120$ (10 years) and $T=360$ (30 years) in the comparison. Since $T$ is the factor under consideration, we do not report the simple return over the entire investment period $T$. Instead, only average daily return and Sharpe ratio are reported in Table \ref{table_time}. 
We observe that the average daily return is not sensitive to the change of $T$, but Sharpe ratio does improve as $T$ increases. Such an observation is consistent with the theoretical conclusion that the optimal ExpMA strategy is the solution to the long-run ($T \to \infty$)) growth maximization problem.

\begin{table}[h]
	\centering
	\begin{tabular}{|c|c|c|c|c|c|} \hline
		Time $T$ & 12 & 24 & 60 & 120 & 360 \\ \hline
		Avg. Daily Return & 0.0289\%& 0.0290\%& 0.0286\%& 0.0293\% & 0.0292\%\\ \hline
		Sharpe Ratio & 0.0153 & 0.0161 & 0.0163 & 0.0169 & 0.0170 \\ \hline
	\end{tabular}
	\caption{Impact of Time Horizon on Performance}
	\label{table_time}
\end{table}

We end this subsection by analyzing how stock volatility $\sig$ affects the performance of the optimal ExpMA strategies. In addition to the base value of $\sig = 0.0436$, we also consider $\sig = 0.0523$ (20\% increase) and $\sig = 0.0349$ (20\% decrease) in the analysis. 
The results in Table \ref{table_vol} clearly show that the performance of both optimal ExpMA and buy-and-hold strategies is negatively correlated with the stock volatility $\sig$, i.e., both strategies perform well (resp. poor) when $\sig$ is small (resp. big). 
Since volatility is bigger in a bear market, the above conclusion indicates that using ExpMA strategies in a bear market may not be ideal.

\begin{table}[h]
	\centering
	\begin{tabular}{|c|c|c|c|c|c|c|} \hline
		Volatility $\sig$ & \multicolumn{2}{|c|}{$\sig=0.0349$} & \multicolumn{2}{|c|}{$\sig=0.0436$} & \multicolumn{2}{|c|}{$\sig=0.0523$} \\ \hline
		Strategy & ExpMA & BH & ExpMA & BH & ExpMA & BH  \\ \hline
		Return & 29.1054\% & 8.6810\% & 17.3764\% & 8.8816 & 11.3129\% & 8.8204\%  \\ \hline
		Avg. Daily Return & 0.0446\%& 0.0158\%& 0.0290\% & 0.0161\% & 0.0199\% & 0.0160\%\\ \hline
		Sharpe Ratio & 0.0197 & 0.0208 & 0.0161 & 0.0169 & 0.0133 & 0.0140 \\ \hline
	\end{tabular}
	\caption{Impact of Stock Volatility on Performance}
	\label{table_vol}
\end{table}

\subsection{Transaction Costs}
Our theoretical results are derived under an assumption of zero transaction costs.  
By comparison, \cite{dai2010trend, dai2016optimal} take into account transaction costs when studying optimal buying/selling times. 
In order to see how transactions costs affect the performance of trading strategies based on ExpMAs, we follow \cite{dai2010trend} and suppose that trading the risk-free asset is frictionless, but trading the risky asset is subject to proportional costs of $\omega$. The stock price process $S$ models the mid-price, hence the cost of purchasing one share at time $t$ is $(1+\omega) S_t$ and the revenue of selling one share at time $t$ is $ (1-\omega) S_t$. In \cite{dai2010trend}, the authors choose $\omega = 0.1\%$.  Here, we consider $\omega= \{0.1\%, \, 0.5\%, \, 1\%\}$. 
In what follows, we focus only on the optimal growth ExpMA Strategy; see \eqref{eqn_def_f_infty}.

Let us explain how portfolio wealth is updated from day $i$ to day $i+1$ when proportional transaction costs are taken into account.
\begin{enumerate}
	\item At day $i+1$ before rebalancing, the optimal ExpMA portfolio wealth $\Pi_{(i+1)-}$ is given by
	\begin{align}
\Pi_{(i+1)-} = (1 - f_i) \cdot \Pi_i + \dfrac{f_i \Pi_i}{e^{X_i}} \cdot e^{X_{i+1}},
	\end{align}
	where $f_i$ is the proportion {of wealth invested in the risky asset during $[i, i+1)$} and $\frac{f_i \Pi_i}{e^{X_i}}$ is the number of shares invested in the risky asset during $[i, i+1)$.
	
	\item {At the time of} rebalancing, from \eqref{eqn_def_f_infty}, the new investment weight $f_{i+1}$ is given by
	\begin{align}
	f_{i+1} = a_\infty \cdot (X_{i+1} - Y_{i+1}) + b_\infty.
	\end{align}
	Suppose the change of shares in the risky asset is $\Delta_{i+1}$.  Then the number of shares in the risky asset after rebalancing is $f_i\Pi_i / e^{X_i} + \Delta_{i+1}$. 
	
	\item Denoting by $\Pi_{i+1}$ the wealth after rebalancing, we have
	\begin{align}
	\Pi_{i+1} &= \Pi_{(i+1)-} - \omega |\Delta_{i+1}| e^{X_{i+1}}, 
	\label{eqn_update_wealth}\\
	 f_{i+1} \cdot \Pi_{i+1} &= \left(f_i\Pi_i / e^{X_i} + \Delta_{i+1} \right) \cdot  e^{X_{i+1}} .
	\end{align}
	Solving the above equations for $\Del_{i+1}$ yields 
	\begin{align}
	\label{eqn_share_change}
	\Delta_{i+1} = \begin{cases}
	\frac{ f_{i+1}  \Pi_{(i+1)-} - f_i \Pi_i \exp(X_{i+1}-X_i) } {\left(1 + \omega \cdot f_{i+1}\right) \exp(X_{i+1})}, & \text{ if } f_{i+1} \ge f_{i} \\[2ex]
	\frac{ f_{i+1}  \Pi_{(i+1)-} - f_i \Pi_i \exp(X_{i+1}-X_i) } {\left(1 - \omega \cdot f_{i+1}\right) \exp(X_{i+1})}, & \text{ if } f_{i+1} < f_{i} 
	\end{cases}.
	\end{align}

\end{enumerate}
 
\begin{table}[h]
	\centering
	\begin{tabular}{|c|c|c|c|c|c|} \hline
		Transaction Cost $\omega$ &  0.1\% & 0.5\% & 1\% & 0\% & BH \\ \hline
		Return & 13.4712\% & -1.4013\% & -17.2807\% & 17.3764\% & 8.8816\%  \\ \hline
		Avg. Daily Return & 0.0219\% & -0.0405\%& 0.0286\%& 0.0290\% & 0.0161\%\\ \hline
		Sharpe Ratio & 0.0119 & -0.0044 & -0.0247 & 0.0161 & 0.0169 \\ \hline
	\end{tabular}
	\caption{Impact of Transaction Costs on Performance}
	\label{table_cost}
\end{table}

We use \eqref{eqn_update_wealth} and \eqref{eqn_share_change} to update the wealth at day $(i+1)$ after rebalancing. Our numerical findings are included in Table \ref{table_cost}, where the last two columns are repeated from Table \ref{table_performance} for the purposes of comparison. 
As we can see from Table \ref{table_cost}, when transaction costs are small, the optimal ExpMA strategy still performs reasonably well. However, as transaction costs increase, the optimal ExpMA strategy is no longer profitable and use of this strategy is no longer advised. 
{Hence, for markets with large transaction costs, optimal ExpMA strategies may not be appropriate. }

\section{Conclusion}
\label{sec_conclusion}
Moving averages are widely used indicators in technical analysis and are commonly applied by practitioners to construct trading strategies. In this paper, we provide a mathematical analysis of trading strategies that are constructed using the ExpMA of the risky asset. Namely, we study the classical optimal investment problems for ExpMA strategies. The drift process of the risky asset in our framework is modeled by either an OU process or a CTMC. We obtain the optimal ExpMA strategy in explicit forms for the affine class ($\Cc_1$) and the square-integrable class ($\Cc_2$) under two optimization criteria: logarithmic utility maximization and long term growth rate maximization.  
We find that, in the case of an OU drift, the optimal ExpMA strategy to the logarithmic utility maximization problem under the $\Cc_2$ class is in affine form, and the solution to the long term growth rate maximization problem is the same for both the $\Cc_1$ and $\Cc_2$ classes. 
By comparison, in the case of a CTMC drift, the optimal strategies of the two maximization problems are significantly different under the $\Cc_1$ and $\Cc_2$ classes. 
In general, our numerical results show that optimal ExpMA strategies deliver excellent returns in comparison to a buy-and-hold strategy.  However, the buy-and-hold strategy has a slightly higher Sharpe ratio. 
When transaction costs are large, our studies suggest caution when using ExpMAs in trading. 

\subsection*{Acknowledgments}
The authors would like to express their gratitude to Douglas Service, who first brought the problems studied in this paper to the authors' attention. 
We are grateful to an associate editor and two anonymous referees, whose comments have helped improve an earlier version of the paper. Bin Zou acknowledges a start-up grant from the University of Connecticut.

\appendix

\section{Appendix}
\label{sec_appendix}

\subsection{Proof of Theorem \ref{thm_OU_utility_linear}}
\label{sec:proof-1}
\begin{proof}
	We begin the proof by solving the SDE \eqref{eqn_OU_drift} for $\mu$, which yields
	\begin{align} \label{eqn_OU_drift_expression}
	\mu_t = \bar{\mu} \left( 1 - e^{-\kappa t} \right) + e^{-\kappa t} \mu_0 + \delta e^{-\kappa t} N_1(t), \text{ where } 	N_1(t) := \int_0^t e^{\kappa s} \dd \bar{W}_s \sim \Nc \left(0, \, \frac{e^{2\kappa t} -1}{2\kappa}  \right).
	\end{align}
	With $\mu$ given by \eqref{eqn_OU_drift_expression}, the mean and variance of $\mu_t$ are obtained by
	\begin{align}
	m_1(t) &:=\Eb [\mu_t] =  \bar{\mu}  + \left( m_1(0) - \bar{\mu} \right) e^{-\kappa t} 
	= \mb_1^1 + \mb_2^1 \cdot e^{-\kappa t} ,
	\label{eqn_drift_mean}\\
	v_1(t) &:=\Vb [\mu_t] = \frac{\delta^2}{2 \kappa} +  \left(v_1(0) - \frac{\delta^2}{2 \kappa} \right) e^{-2\kappa t}
	=\vb_1^1 + \vb_2^1 \cdot e^{-2\kappa t}, \label{eqn_drift_variance}
	\end{align}
	where $\mb_1^1 := \bar{\mu}$,  $\mb_2^1:=  m_1(0) - \bar{\mu}$, 
	$\vb_1^1 :=\frac{\delta^2}{2 \kappa}$, and $\vb_2^1 := v_1(0) - \frac{\delta^2}{2 \kappa}$.
	
	Solving the SDE of $Z$ gives
	\begin{equation} \label{eqn_Z_first_expression}
	Z_t = \int_0^t \left(\mu_s - \frac{1}{2} \sigma^2 \right) e^{-\lambda (t-s)} \dd s + \sigma e^{-\lambda t} N_2(t),\text{ where } 
	N_2(t) := \int_0^t e^{\lambda s} \dd W_s \sim \Nc \left(0, \, \frac{e^{2\lambda t} -1}{2\lambda} \right).
	\end{equation}
	The first term of $Z_t$ in \eqref{eqn_Z_first_expression} can be rewritten as
	\begin{align}
	\int_0^t \left(\mu_s - \frac{1}{2} \sigma^2 \right) e^{-\lambda (t-s)} \dd s = e^{-\lambda t} \int_0^t \mu_s  e^{\lambda s} \dd s  - \frac{\sigma^2}{2\lambda} \left(1- e^{-\lambda t}  \right).
	\end{align}
	By applying integration by parts to the first integral, we obtain
	\begin{align}
	\int_0^t \mu_s  e^{\lambda s} \dd s = \frac{\kappa \bar{\mu}}{\lambda (\kappa - \lambda)} \left( e^{\lambda t}  - 1 \right) + \frac{\mu_0}{\kappa - \lambda}  - \frac{e^{\lambda t} \mu_t}{\kappa - \lambda}   + \frac{\delta}{\kappa - \lambda} N_3(t), \text{ where } N_3(t):=\int_0^t e^{\lambda s} \dd \bar{W}_s.
	\end{align}
	Notice that $N_3(t)$ and $N_2(t)$ have the same distribution, and are independent.
	Finally, we rewrite $Z_t$ by
	\begin{align} \label{eqn_Z_second_expression}
	Z_t = \frac{1}{\lambda} \left(\frac{\kappa \bar{\mu}}{\kappa - \lambda} - \frac{\sigma^2}{2}\right) \left( 1- e^{-\lambda t}  \right) + \frac{1}{\kappa - \lambda} e^{-\lambda t} \mu_0 - \frac{1}{\kappa - \lambda} \mu_t + \frac{\delta}{\kappa - \lambda} e^{-\lambda t} N_3(t)+ \sigma e^{-\lambda t} N_2(t).
	\end{align}
	In establishing \eqref{eqn_Z_second_expression}, we need the technical condition $\kappa \neq \lambda$ imposed in Assumption \ref{assumption_OU}.

	We find the covariances of the random variables that appear in \eqref{eqn_OU_drift_expression}and \eqref{eqn_Z_second_expression} as
	\begin{align}
	\text{Co}\Vb (\mu_0, N_1(t)) &= \text{Co}\Vb (\mu_0, N_2(t)) = \text{Co}\Vb (\mu_0, N_3(t)) =
	\text{Co}\Vb (N_1(t), N_2(t)) =  \text{Co}\Vb (N_2(t), N_3(t))=0,\\
	\text{Co}\Vb (N_1(t), N_3(t)) &= \Eb [N_1(t) N_3(t)]= \frac{1}{\kappa + \lambda} \left( e^{(\kappa + \lambda) t} - 1 \right), \\
	\text{Co}\Vb(\mu_0,\mu_t) &= e^{-\kappa t} v_1(0), \quad
	\text{Co}\Vb(\mu_t, N_3(t)) = \frac{\delta}{\kappa + \lambda} e^{-\kappa t}
	\left( e^{(\kappa + \lambda) t} - 1 \right).
	\end{align}
	
	Now we are ready to find the mean and the variance of $Z_t$:
	\begin{align}
	m_2(t) := \Eb [Z_t] 
	&=\frac{2\bar{\mu}-\sigma^2}{2\lambda} + \left[\frac{\lambda m_1(0) - \kappa \bar{\mu}}{\lambda(\kappa-\lambda)} + \frac{\sigma^2}{2 \lambda}\right]e^{-\lambda t}+ \frac{\bar{\mu} - m_1(0)}{\kappa - \lambda} e^{-\kappa t}\\
	&= \mb_1^2 + \mb_2^2 \cdot e^{-\lambda t} + \mb_3^2 \cdot e^{-\kappa t}, \label{eqn_OU_m2}
	\\
	v_2(t) :=\Vb [Z_t] 
	&= \frac{\sigma^2}{2\lambda} + \frac{\delta^2}{2 \kappa \lambda (\kappa + \lambda)}
	+ \left[\frac{1}{(\kappa - \lambda)^2} \left(v_1(0) - \frac{\delta^2}{2\lambda}\right) - \frac{\sigma^2}{2\lambda} \right] e^{-2 \lambda t} \\
	&\quad + \frac{1}{(\kappa - \lambda)^2} \left(v_1(0) - \frac{\delta^2}{2\kappa}\right) e^{-2 \kappa t}
	- \frac{2}{(\kappa - \lambda)^2}\left(v_1(0) - \frac{\delta^2}{\kappa + \lambda}\right) e^{-( \kappa + \lambda) t} \\
	&= \vb_1^2 + \vb_2^2 \cdot e^{-2\lambda t} + \vb_3^2 \cdot e^{-2\kappa t}
	+ \vb_4^2 \cdot e^{- (\kappa + \lambda) t}, \label{eqn_OU_v2}
	\end{align}
	where we have used the definitions $\mb_1^2:=\dfrac{2\bar{\mu}-\sigma^2}{2\lambda}$,  
	$\mb_2^2:=\dfrac{\lambda m_1(0) - \kappa \bar{\mu}}{\lambda(\kappa-\lambda)} + \dfrac{\sigma^2}{2 \lambda}$, 
	$\mb_3^2:=\dfrac{\bar{\mu} - m_1(0)}{\kappa - \lambda}$,
	$\vb_1^2:=\dfrac{\sigma^2}{2\lambda} + \dfrac{\delta^2}{2 \kappa \lambda (\kappa + \lambda)}$, 
	$\vb_2^2:= \dfrac{1}{(\kappa - \lambda)^2} \left(v_1(0) - \dfrac{\delta^2}{2\lambda}\right) - \dfrac{\sigma^2}{2\lambda}$,
	$\vb_3^2:=\dfrac{1}{(\kappa - \lambda)^2} \left(v_1(0) - \dfrac{\delta^2}{2\kappa}\right)$, 
	$\vb_4^2:=- \dfrac{2}{(\kappa - \lambda)^2}\left(v_1(0) - \dfrac{\delta^2}{\kappa + \lambda}\right)$.
	
	Similarly, we obtain $\mathbb{E}[\mu_t Z_t]$ by
	\begin{align}
	m_3(t):= \Eb [\mu_t Z_t] &=\frac{1}{\lambda} \left(\frac{\kappa \bar{\mu}}{\kappa - \lambda} - \frac{\sigma^2}{2}\right) \left( 1- e^{-\lambda t}  \right) m_1(t)+\frac{e^{-\lambda t}}{\kappa-\lambda}\left[\bar\mu(1-e^{-\kappa t})m_1(0)+e^{-\kappa t}(m_1^2(0)+v_1(0))\right] \\
	&\quad -\frac{1}{\kappa-\lambda}\left[v_1(t)+m_1^2(t)\right]+\frac{\delta^2}{\kappa^2-\lambda^2}\left(1-e^{-(\kappa+\lambda)t}\right) \\
	&= \mb_1^3 + \mb_2^3 \cdot e^{-2\kappa t} + \mb_3^3 \cdot e^{- (\kappa+\lambda)t}
	+ \mb_4^3 \cdot e^{-\kappa t} + \mb_5^3 \cdot e^{-\lambda t}, \label{eqn_OU_m3}
	\end{align}
	where 
	$\mb_1^3:= \bar{\mu} \mb_1^2 + \dfrac{\delta^2}{2\kappa(\kappa + \lambda)}$, 
	$\mb_2^3:= - \bar{\mu} \mb_3^2 - \dfrac{v_1(0)}{\kappa - \lambda}  -
	\dfrac{m_1(0)}{\kappa - \lambda}  \mb_2^1 + \dfrac{\delta^2}{2\kappa(\kappa - \lambda)}$,
	$\mb_3^3 :=  -  \left(\dfrac{\kappa \bar{\mu}}{\lambda(\kappa - \lambda)} - \dfrac{\sigma^2}{2 \lambda}\right) m_1(0)
	+ \dfrac{m_1(0)^2 + v_1(0) }{\kappa - \lambda} -\dfrac{\delta^2}{\kappa^2 - \lambda^2} - \bar{\mu} \mb_2^2$,
	$\mb_4^3:=- \bar{\mu} \mb_1^2 +  \bar{\mu} \mb_3^2 +  \left(\dfrac{\kappa \bar{\mu}}{\lambda(\kappa - \lambda)} - \dfrac{\sigma^2}{2 \lambda}\right) m_1(0)
	- \dfrac{\bar{\mu} m_1(0)}{\kappa - \lambda} $, and
	$\mb_5^3:= \bar{\mu} \mb_2^2$.
	
	Finally, we are able to compute $A(T)$, $B(T)$, $C(T)$, and $D(T)$ as follows:
	\begin{align}
	A(T)&=\int_0^T \Eb[\mu_t Z_t] \dd t &\hspace{-2em} &=\mb_1^3 T + \frac{1}{2 \kappa} \mb_2^3 \left(1 - e^{-2\kappa T} \right) + \frac{1}{\kappa+\lambda} \mb_3^3\left(1 - e^{-(\kappa+\lambda) T} \right)\\
	&\; &\hspace{-2em} &\quad + \frac{1}{\kappa} \mb_4^3 \left(1 - e^{-\kappa T} \right)
	+ \frac{1}{\lambda} \mb_5^3\left(1 - e^{-\lambda T} \right), \label{eqn_OU_A}\\
	B(T)&=\int_0^T\Eb[\mu_t] \dd t &\hspace{-2em} &=\mb_1^1 T + \frac{1}{\kappa}\mb_2^1 \left(1 - e^{-\kappa T} \right), \label{eqn_OU_B}\\
	C(T)&=\int_0^T \Eb[Z_t^2] \dd t &\hspace{-2em} &=\left( (\mb_1^2)^2 + \vb_1^2\right)T
	+ \frac{1}{2\lambda} \left( (\mb_2^2)^2 + \vb_2^2\right)\left(1 - e^{-2\lambda T} \right) \\
	&\; &\hspace{-2em} &\quad + \frac{1}{2\kappa} \left( (\mb_3^2)^2 + \vb_3^2\right)\left(1 - e^{-2\kappa T} \right)+ \frac{2\mb_2^2 \mb_3^3 + \vb_4^2}{\kappa+\lambda} \left(1 - e^{-(\kappa+\lambda) T} \right)\\
	&\; &\hspace{-2em} &\quad +\frac{2}{\lambda} \mb_1^2 \mb_2^2 \left(1 - e^{-\lambda T} \right)
	+\frac{2}{\kappa} \mb_1^2 \mb_2^2 \left(1 - e^{-\kappa T} \right), \label{eqn_OU_C}\\
	D(T)&=\int_0^T\Eb[Z_t] \dd t &\hspace{-2em} & =\mb_1^2 T +\frac{1}{\lambda}\mb_2^2 \left(1 - e^{-\lambda T} \right)+ \frac{1}{\kappa}\mb_3^2 \left(1 - e^{-\kappa T} \right). \label{eqn_OU_D}
	\end{align}
	
	Notice that all the expressions in \eqref{eqn_OU_A}, \eqref{eqn_OU_B}, \eqref{eqn_OU_C}, and \eqref{eqn_OU_D} are fully \emph{explicit}, and only depend on the model parameters from Assumptions \ref{assumption_OU} and \ref{assumption_normal_mu0}.
\end{proof}

\subsection{Proof of Theorem \ref{thm_MC_utility_C1}}
\label{appen_thm_MC_utility_C1}
\begin{proof}
By solving the SDE of $Z_t$, we obtain \eqref{eqn_n2t} through
\begin{align}
n_2(t) &=\Eb[Z_t]
=\frac{1}{\lambda} \left(n_1-\frac{1}{2}\sigma^2\right) \left(1-e^{-\lambda t}\right).
\end{align}
Further, we have
\begin{align}
n_3(t)&=\Eb[\mu_tZ_t]=e^{-\lambda t}\int_0^t\Eb[\mu_s\mu_t]e^{\lambda s}\, \dd s-\frac{\sigma^2}{2\lambda}(1-e^{-\lambda t}) n_1 +\sigma e^{-\lambda t}\Eb[N_2(t)\mu_t], 
\label{eqn_n3}
\end{align}
where $n_1(t)$ is computed in \eqref{eqn_n1}.

Note that, for $s\leq t$, we have
\begin{align}
\Eb[\mu_s\mu_t] &=\Eb[\mu_s\Eb[\mu_t|\mu_s]]
=n_1^2+ \gamma e^{-(\alpha+\beta)(t-s)}, 
\end{align}
where 
\begin{align}
\gamma:=\mathbb{V}[\mu_t]=\frac{\alpha\beta}{(\alpha+\beta)^2}(\rho_1-\rho_2)^2.
\end{align}
This, together with \eqref{eqn_n3}, implies \eqref{eqn_n3t}.

We next compute
\begin{align}Z_t^2&=e^{-2\lambda t}\int_0^t\int_0^t\mu_s\mu_ve^{\lambda s}e^{\lambda v}\, \dd s\,\dd v+\frac{\sigma^4}{4\lambda^2}\left(1-e^{-\lambda t}\right)^2+\sigma^2e^{-2\lambda t}N_2^2(t)\\
&\quad -\frac{\sigma^2}{\lambda}e^{-\lambda t}\left(1-e^{-\lambda t}\right)\int_0^t\mu_se^{\lambda s}\, \dd s+2\sigma e^{-2\lambda t}N_2(t)\int_0^t\mu_se^{\lambda s}\, \dd s-\frac{\sigma^3}{\lambda}\left(1-e^{-\lambda t}\right)e^{-\lambda t}N_2(t).
\end{align}
Hence, \eqref{eqn_n4t} is shown.
\end{proof}

\subsection{Proof of Proposition \ref{p1}}
\label{appen_p1}
\begin{proof}
	For all $0 < h \ll t$, denote by $\#(h)$ the number of jumps for the drift $\mu$ in $(0,h]$. We have that 
	\begin{align}
	\Pb \left( Q_{1,t} \leq x\,\Big|\,\mu_0=\rho_1\right)
	=\quad &\Pb\left(\int_0^te^{-\lambda s}\mu_s\, \dd s\leq x\,\Big|\,\mu_0=\rho_1, \quad 
	\#(h) = 0\right)\cdot \Pb(\#(h) = 0\,|\,\mu_0=\rho_1)\\
	+\; &\Pb\left(\int_0^te^{-\lambda s}\mu_s\, \dd s\leq x\,\Big|\,\mu_0=\rho_1, \quad 
	\#(h) = 1\right)\cdot \Pb(\#(h) = 1\,|\,\mu_0=\rho_1)\\
	+\; &\Pb\left(\int_0^te^{-\lambda s}\mu_s\, \dd s\leq x\,\Big|\,\mu_0=\rho_1, \quad 
	\#(h) >1\right)\cdot \Pb(\#(h) >1\,|\,\mu_0=\rho_1).
	\end{align}
	Let I, II, III denote the first, the second, and the third term of the right-hand-side of the equation above, respectively. We proceed to obtain the following results
	\begin{align}
	\text{I}&= \Pb\left(\int_h^te^{-\lambda s}\mu_s\, \dd s\leq x-\int_0^he^{-\lambda s}\rho_1\, \dd s\,\Big|\,\mu_0=\rho_1, \quad \#(h) = 0\right)\cdot e^{-\alpha h}\\
	&= \Pb\left(\int_h^te^{-\lambda s}\mu_s\,\dd s\leq x-\int_0^he^{-\lambda s}\rho_1\, \dd s\,\Big|\,\mu_h=\rho_1\right)\cdot (1-\alpha h+o(h))\\
	&= \Pb\left(\int_h^te^{-\lambda (s-h)}\mu_s\,\dd s\leq e^{\lambda h} \left(x-\int_0^he^{-\lambda s}\rho_1\, \dd s \right)\,\Bigg|\,\mu_h=\rho_1\right)\cdot (1-\alpha h+o(h))\\
	&=\Pb \left(\int_h^t e^{-\lambda (s-h)}\mu_s\,\dd s \leq x-h(\rho_1-\lambda x)+o(h)\,\Big|\,\mu_h=\rho_1\right)\cdot (1-\alpha h+o(h))\\
	&=\Pb \left(\int_0^{t-h} e^{-\lambda s}\mu_s\,\dd s \leq x-h(\rho_1-\lambda x)+o(h)\,\Big|\,\mu_0=\rho_1\right)\cdot (1-\alpha h+o(h))\\
	&=u(t-h,x-h(\rho_1-\lambda x)+o(h)) \cdot (1-\alpha h+o(h))\\
	&=(1-\alpha h) \cdot u(t-h,x-h(\rho_1-\lambda x))+o(h),\\
	\text{II}&\leq \Pb\left(\int_h^te^{-\lambda s}\mu_s\,\dd s\leq x-\int_0^he^{-\lambda s}\rho_1\, \dd s\,\Big|\,\mu_0=\rho_1, \quad \#(h) = 1\right)\cdot (\alpha h+o(h))\\
	&=\Pb\left(\int_h^te^{-\lambda s}\mu_s\,\dd s\leq x-\int_0^he^{-\lambda s} \rho_1 \,\dd s\,\Big|\,\mu_h=\rho_2\right)\cdot (\alpha h+o(h))\\
	&=\Pb\left(\int_h^te^{-\lambda (s-h)}\mu_s\,\dd s\leq e^{\lambda h}\left(x-\int_0^he^{-\lambda s}\rho_1\, \dd s\right)\,\Bigg|\,\mu_h=\rho_2\right)\cdot (\alpha h+o(h))\\
	&=v(t-h,x+ O(h))\cdot (\alpha h+o(h))\\
	&=\alpha h \cdot v(t,x)+o(h).
	\end{align}
	Similarly, we can show that
	\begin{align}
	\text{II}&\geq \Pb \left(\int_h^te^{-\lambda s}\mu_s\,\dd s\leq x-\int_0^he^{-\lambda s}\rho_2\,\dd s\,\Big|\,\mu_0=\rho_1,\quad \#(h)=1\right)\cdot (\alpha h+o(h))\\
	&=\alpha h \cdot v(t,x)+o(h).
	\end{align}
	Obviously, $\text{III} = o(h)$.
	
	Therefore,
	\begin{align}
	u(t,x)=(1-\alpha h) \cdot u(t-h,x-h(\rho_1-\lambda x))+ \alpha h \cdot v(t,x)+o(h),
	\end{align}
	which implies the PDE of $u$ in \eqref{eqn_MC_ODE_system}. Similarly we can show the PDE of $v$ in \eqref{eqn_MC_ODE_system} holds as well. Moreover,
	\begin{align}
	u\left(t,\frac{\rho_1}{\lambda}(1-e^{-\lambda t})\right)&= \Pb \left(\int_0^te^{-\lambda s}\mu_s\,\dd s\leq \frac{\rho_1}{\lambda}(1-e^{-\lambda t})\,\Big|\,\mu_0=\rho_1\right)\\
	&= \Pb(\mu_s = \rho_1, \; \forall \, s \in (0,t]\,|\,\mu_0=\rho_1) \\
	&=e^{-\alpha t}.
	\end{align}
	The other boundary conditions are obvious.
\end{proof}

\subsection{Proof of Lemma \ref{lemma_limit_mu0}}
\label{appen_lemma_limit_mu0}
\begin{proof}
Recall that $u_\infty$ and $v_\infty$ are the conditional c.d.f. of $\int_0^\infty e^{-\lambda s}\mu_s\, \dd s$ given $\mu_0=\rho_1$ and $\mu_0=\rho_2$ respectively, see \eqref{eqn_def_u_v_infty}.

	Then the conditional p.d.f. of $(Q_\infty \, | \, \mu_0=\rho_1)$ is given by
	\begin{align}
	p_\infty(x):=\frac{\dd}{\dd x}
	\Pb\left( Q_\infty \le x | \mu_0 = \rho_1\right) =
	\int_{-\infty}^\infty u_\infty(z) \cdot \phi_\infty'(x-z)\, \dd z,
	\end{align}
	and the conditional p.d.f. of $(Q_\infty \, | \, \mu_0=\rho_2)$ is given by
	\begin{align}
	q_\infty(x):=
	\frac{\dd}{\dd x}
	\Pb\left( Q_\infty \le x | \mu_0 = \rho_2\right) =
	\int_{-\infty}^\infty v_\infty(z)\cdot \phi_\infty'(x-z)\, \dd z,
	\end{align}
	where $\phi_\infty$ is the p.d.f. of $\sigma\int_0^\infty e^{-\lambda s}\,dW_s-\frac{\sigma^2}{2\lambda}$, i.e.,
	\begin{equation}\label{eq12}
	\phi_\infty(x) =\sqrt{\dfrac{\lambda}{\pi\sigma^2}}\cdot \exp \left( -\dfrac{\left[x+\frac{\sigma^2}{2\lambda}\right]^2}{\frac{\sigma^2}{\lambda}} \right).
	\end{equation}
	Similar to $\Eb[\mu_0 | Q_t]$ in \eqref{eq10}, we obtain
	\begin{equation}\label{eq11}
	\Eb[\mu_0|Q_\infty]=\frac{\rho_1\beta \cdot p_\infty(Q_\infty)+\rho_2\alpha \cdot q_\infty(Q_\infty)}{\beta \cdot p_\infty(Q_\infty)+\alpha \cdot q_\infty(Q_\infty)}.
	\end{equation}
	Equations \eqref{eq10} and \eqref{eq11} indicate that, in order to show \eqref{eq8}, it suffices to show that $p(t, \cdot)$ and $q(t, \cdot)$ uniformly converge to $p_\infty(\cdot)$ and $q_\infty(\cdot)$, respectively. We will use \cite[Lemma 1]{MR773179} to prove such a result.
	
	We have that
	\begin{align}
	p(t, x)=\int_{-\infty}^\infty u(t,z)\cdot\sqrt{\frac{\lambda}{\pi\sigma^2\left(1-e^{-2\lambda t}\right)}}\cdot \frac{-2\left[x-z+\frac{\sigma^2}{2\lambda}\left(1-e^{-\lambda t}\right)\right]}{\frac{\sigma^2}{\lambda}\left(1-e^{-2\lambda t}\right)}\cdot e^{-\dfrac{\left[x-z+\frac{\sigma^2}{2\lambda}\left(1-e^{-\lambda t}\right)\right]^2}{\frac{\sigma^2}{\lambda}\left(1-e^{-2\lambda t}\right)}}\, \dd z.
	\end{align}
	For $t>1$, these exists some constant $K>0$ such that
	\begin{align}
	p(t, x)&\leq K\int_{-\infty}^\infty \left|x-z+\frac{\sigma^2}{2\lambda}\left(1-e^{-\lambda t}\right)\right|\cdot e^{-\dfrac{\left[x-z+\frac{\sigma^2}{2\lambda}\left(1-e^{-\lambda t}\right)\right]^2}{\frac{\sigma^2}{\lambda}\left(1-e^{-2\lambda t}\right)}}\, \dd z\\
	&= K \int_{-\infty}^\infty |z|\cdot e^{-\dfrac{z^2}{\frac{\sigma^2}{\lambda}\left(1-e^{-2\lambda t}\right)}}\, \dd z\leq K\int_{-\infty}^\infty |z|\cdot e^{-\frac{\lambda z^2}{\sigma^2}}\, \dd z<\infty.
	\end{align}
	Denote 
	\begin{align}
	\theta_t:=\frac{\sigma^2}{2\lambda}\left(1-e^{-\lambda t}\right) \quad \text{ and } \quad 
	\zeta_t:=\frac{\sigma^2}{\lambda}\left(1-e^{-2\lambda t}\right).
	\end{align}
	Let $t>\frac{1}{\lambda}$ and $0 < |x-y|\leq \eps$ for some positive constant $\eps$, we have that
	\begin{align}
	|p(t,x)-p(t,y)|&\leq K \int_{-\infty}^\infty\left|\left(x-z+\theta_t\right)\cdot e^{-\dfrac{\left(x-z+\theta_t\right)^2}{\zeta_t}}-\left(y-z+\theta_t\right)\cdot e^{-\dfrac{\left(y-z+\theta_t\right)^2}{\zeta_t}}\right|\, \dd z\\
	&=K \int_{-\infty}^\infty \left|((x-y)+z)\cdot e^{-\dfrac{((x-y)+z)^2}{\zeta_t}}-z\cdot e^{-\dfrac{z^2}{\zeta_t}}\right|\, \dd z\\
	&\leq K \int_{-\infty}^\infty |x-y|\cdot e^{-\dfrac{((x-y)+z)^2}{\zeta_t}}\, \dd z+ K\int_{-\infty}^\infty|z|\cdot \left|e^{-\dfrac{((x-y)+z)^2}{\zeta_t}}-e^{-\dfrac{z^2}{\zeta_t}}\right|\, \dd z\\
	&\leq K \eps \int_{-\infty}^\infty e^{-\dfrac{((x-y)+z)^2}{\zeta_t}}\,\dd z+K\int_{-\infty}^\infty|z|\cdot e^{-\dfrac{z^2}{\zeta_t}}\cdot \left|e^{\dfrac{z^2}{\zeta_t}-\dfrac{((x-y)+z)^2}{\zeta_t}}-1\right|\, \dd z\\
	&\leq K \eps\int_{-\infty}^\infty e^{-\dfrac{z^2}{\zeta_t}}\,\dd z+K \int_{-\infty}^\infty|z|\cdot e^{-\dfrac{\lambda z^2}{\sigma^2}}\cdot \left(e^{\left|\dfrac{z^2}{\zeta_t}-\dfrac{((x-y)+z)^2}{\zeta_t}\right|}-1\right)\,\dd z\\
	&\leq K \eps\int_{-\infty}^\infty e^{-\dfrac{\lambda z^2}{\sigma^2}}\,\dd z+K\int_{-\infty}^\infty|z|\cdot e^{-\dfrac{\lambda z^2}{\sigma^2}}\cdot \left(e^{\dfrac{2\lambda}{\sigma^2}\eps(\eps+2|z|)}-1\right)\, \dd z\\
	&\rightarrow 0,\quad \eps \rightarrow 0,
	\end{align}
	where $K$ is some constant that is independent of $t, x, y, \eps$ and may vary from line to line, and the limit result on the last line follows from the dominated convergence theorem.
	
	Moreover,
	\begin{align}
	\lim_{x\rightarrow\pm\infty}p_\infty(x)&=\lim_{x\rightarrow\pm\infty}\int_{-\infty}^\infty u_\infty(z)\cdot\sqrt{\dfrac{\lambda}{\pi\sigma^2}}\cdot \dfrac{-2\left(x-z+\frac{\sigma^2}{2\lambda}\right)}{\frac{\sigma^2}{\lambda}}\cdot e^{-\dfrac{\left(x-z+\frac{\sigma^2}{2\lambda}\right)^2}{\frac{\sigma^2}{\lambda}}}\, \dd z\\
	&=-\dfrac{2\lambda}{\sigma^2}\sqrt{\dfrac{\lambda}{\pi\sigma^2}}\int_{-\infty}^\infty \lim_{x\rightarrow\pm\infty}u_\infty\left(x+\frac{\sigma^2}{2\lambda}-z\right)\cdot z\cdot e^{\dfrac{\lambda z^2}{\sigma^2}}\, \dd z\\
	&=0,
	\end{align}
	where the third equality follows from the dominated convergence theorem. 
	
	Thus, by \cite[Lemma 1]{MR773179}, $p(t,\cdot)$ uniformly converges to $p_\infty(\cdot)$. Similarly, we can show that $q(t, \cdot)$ uniformly converges to $q_\infty(\cdot)$.
\end{proof}

\subsection{Proof of Lemma \ref{lem_limit_distribution}}
\label{appen_lem_limit_distribution}
\begin{proof}
	Similar to the argument in the proof of Proposition \ref{p1}, we can show that $u_\infty$ and $v_\infty$ satisfy the following ordinary differential equations (ODE) system:
	\begin{empheq}[left={\empheqlbrace}]{align}
	& (\rho_1-\lambda x)u_\infty'+\alpha u_\infty -\alpha v_\infty=0,\quad\frac{\rho_1}{\lambda}<x<\frac{\rho_2}{\lambda}\\
	& (\rho_2-\lambda x)v_\infty'+\beta v_\infty -\beta u_\infty=0,\quad\frac{\rho_1}{\lambda}<x<\frac{\rho_2}{\lambda},\\
	& u_\infty\left(\frac{\rho_1}{\lambda}\right)=0,\quad u_\infty\left(\frac{\rho_2}{\lambda}\right)=1,\\
	& v_\infty\left(\frac{\rho_1}{\lambda}\right)=0,\quad v_\infty\left(\frac{\rho_2}{\lambda}\right)=1.
	\end{empheq}
	Notice that, from the second ODE, we have 
	\begin{align}
	u_\infty = v_\infty + \frac{\rho_2 - \lambda x}{\beta}v_\infty', \quad \text{ and then } \quad 
	u_\infty' = \(1-\frac{\lambda}{\beta} \) v_\infty' + \frac{\rho_2 - \lambda x}{\beta} v_\infty''.
	\end{align}
	Next, by plugging the above expressions into the first ODE, we obtain a second-order ODE that involves $v_\infty$ only. Solving such an ODE of $v_\infty$ yields $v_\infty$, which further implies $u_\infty$ is as stated in the lemma.
\end{proof}

\bibliographystyle{apalike}
\bibliography{references}

\end{document}